\useunder{\uline}{\ul}{}
\newtheorem{theorem}{Theorem}
\newtcolorbox[auto counter, number within=section, use counter=infoBox]{mybox}[2][]{%
    breakable,
    upperbox=visible,
    title=Prompt~\thetcbcounter: #2,
    label=#1
}
\definecolor{orange}{HTML}{EC834A}
\definecolor{green}{HTML}{A2C27F}
\definecolor{SkyBlue}{RGB}{135, 206, 235}  
\definecolor{Lavender}{RGB}{230, 230, 250}  
\definecolor{Peach}{RGB}{255, 218, 185}  
\definecolor{SlateGray}{RGB}{112, 128, 144}  
  \providecommand\BibTeX{{%
    \normalfont B\kern-0.5em{\scshape i\kern-0.25em b}\kern-0.8em\TeX}}}
\begin{document}

\title{Explainable Recommendation with Simulated Human Feedback}

\author{Jiakai Tang}
\email{tangjiakai5704@ruc.edu.cn}
\affiliation{%
  \institution{Gaoling School of Artificial Intelligence, Renmin University of China}
  \city{Beijing}
  \country{China}
}

\author{Jingsen Zhang}
\email{zhangjingsen@ruc.edu.cn}
\affiliation{%
  \institution{Gaoling School of Artificial Intelligence, Renmin University of China}
  \city{Beijing}
  \country{China}
}

\author{Zihang Tian}
\email{tzh2003@ruc.edu.cn}
\affiliation{%
  \institution{Gaoling School of Artificial Intelligence, Renmin University of China}
  \city{Beijing}
  \country{China}
}

\author{Xueyang Feng}
\email{xueyangfeng@ruc.edu.cn}
\affiliation{%
  \institution{Gaoling School of Artificial Intelligence, Renmin University of China}
  \city{Beijing}
  \country{China}
}

\author{Lei Wang}
\email{wanglei154@ruc.edu.cn}
\affiliation{%
  \institution{Gaoling School of Artificial Intelligence, Renmin University of China}
  \city{Beijing}
  \country{China}
}

\author{Xu Chen}
\authornote{Corresponding Author.}
\email{xu.chen@ruc.edu.cn}
\affiliation{%
  \institution{Gaoling School of Artificial Intelligence, Renmin University of China}
  \city{Beijing}
  \country{China}
}

\renewcommand{\shortauthors}{Jiakai Tang et al.}

\begin{abstract}
Recent advancements in explainable recommendation have greatly bolstered user experience by elucidating the decision-making rationale. However, the existing methods actually fail to provide effective feedback signals for potentially better or worse generated explanations due to their reliance on traditional supervised learning paradigms in sparse interaction data.
To address these issues, we propose a novel human-like feedback-driven optimization framework. This framework employs a dynamic interactive optimization mechanism for achieving human-centered explainable requirements without incurring high labor costs. Specifically, we propose to utilize large language models (LLMs) as human simulators to predict human-like feedback for guiding the learning process. To enable the LLMs to deeply understand the task essence and meet user's diverse personalized requirements, we introduce a human-induced customized reward scoring method, which helps stimulate the language understanding and logical reasoning capabilities of LLMs. Furthermore, considering the potential conflicts between different perspectives of explanation quality, we introduce a principled Pareto optimization that transforms the multi-perspective quality enhancement task into a multi-objective optimization problem for improving explanation performance. At last, to achieve efficient model training, we design an off-policy optimization pipeline. By incorporating a replay buffer and addressing the data distribution biases, we can effectively improve data utilization and enhance model generality. Extensive experiments on four datasets demonstrate the superiority of our approach. 
\end{abstract}

\begin{CCSXML}
<ccs2012>
   <concept>
       <concept_id>10002951.10003317.10003347.10003350</concept_id>
       <concept_desc>Information systems~Recommender systems</concept_desc>
       <concept_significance>500</concept_significance>
       </concept>
 </ccs2012>
\end{CCSXML}

\ccsdesc[500]{Information systems~Recommender systems}

\keywords{Explainable Recommendation, Large Language Model, Reinforcement Learning}

\received{20 February 2007}
\received[revised]{12 March 2009}
\received[accepted]{5 June 2009}

\maketitle

\section{Introduction}\label{sec:intro}
In recent years, Explainable Recommender Systems (ERS) have garnered growing attention in both of the academic and industry sectors~\cite{Zhang_2020,10.1145/3331184.3331254}. The fundamental goal of explainable algorithms is to elucidate the rationale behind the recommendation behavior, offering users clear insights into why certain items are recommended. By providing transport explanations, ERS not only facilitate quicker decision-making for customers but also significantly better overall satisfaction. This paradigm shift towards trustworthy recommendation is driven by the increasing demand for deeper user engagement in automated applications. 
Powered by potent language models, ERS transform the explainable recommendation task into the user review generation problem. In specific, Attribute-to-Sequence (Att2Seq)~\cite{dong2017learning} and Neural Rating and Tips generation (NRT)~\cite{li2017neural} leverage LSTM~\cite{hochreiter1997long} and GRU~\cite{69e088c8129341ac89810907fe6b1bfe} based RNN architectures respectively to convert latent user and item representations into sentence outputs, aiming to maximize the likelihood of the target texts. Due to insufficient guidance on informative content generation, these approaches often suffer from uncontrolled explanation quality. To address this issue, recent work, such as NETE~\cite{li2020generate}, PETER~\cite{li-etal-2021-personalized}, ERRA~\cite{DBLP:conf/acl/ChengWLZ0LL23}, and PEPLER~\cite{li2023personalized} have been developed, they utilize up-to-date architectures (\emph{e.g.}, Transformer~\cite{vaswani2017attention}, GPT-2~\cite{radford2019language}, etc.) as backbones and integrate additional augmented feature (\emph{e.g.}, pre-extracted user preference aspects like ``screen$\to$small'') for promoting the personalized explanation generation.

While recent advancements in the filed of ERS have demonstrated promising achievements, they also face some severe drawbacks that cannot be ignored. As depicted in Figure~\ref{fig:introduction}, \textbf{firstly}, in real-world recommendation scenarios, 
high-quality recommendation explanation data are often scarce. The insufficient high-quality training data impairs the model's ability to accurately capture genuine user preferences and item characteristics, particularly affecting the performance for inactive users and niche items. Consequently, these issues of data sparsity ultimately lead to suboptimal model robustness, that is, reducing the recommendation and explanation quality for disadvantaged groups. \textbf{Secondly}, personalized recommendation explanations that better align with users' subjective preferences can lead to improved user satisfaction. Therefore, the text quality should ideally be assessed by humans instead of relying on various text similarity metrics as a compromise.
However, traditional methods predominantly follow a supervised learning paradigm, focusing on maximizing the likelihood of generated target texts. Such approach fails to distinguish between potentially superior or inferior generated explanations, thereby impeding the generalization performance in complicated application scenarios. For example, in the right half of Figure~\ref{fig:introduction}, we can easily see that the explanation ``My skin feels softer and smooth as it absorbs quickly'' is more informative and persuasive than ``It feels nice on my skin'' for user $u_2$ when purchasing skincare product $v_2$. However,
due to blindly focus on fitting the target texts in the corpus, it cannot effectively provide the correct optimization signal for model outputs that are potentially better (or poorer) than the ground truths. This constraint in turn restricts the overall performance of model's generated sentences.

\begin{figure}[t]
\centering
\includegraphics[width=\textwidth]{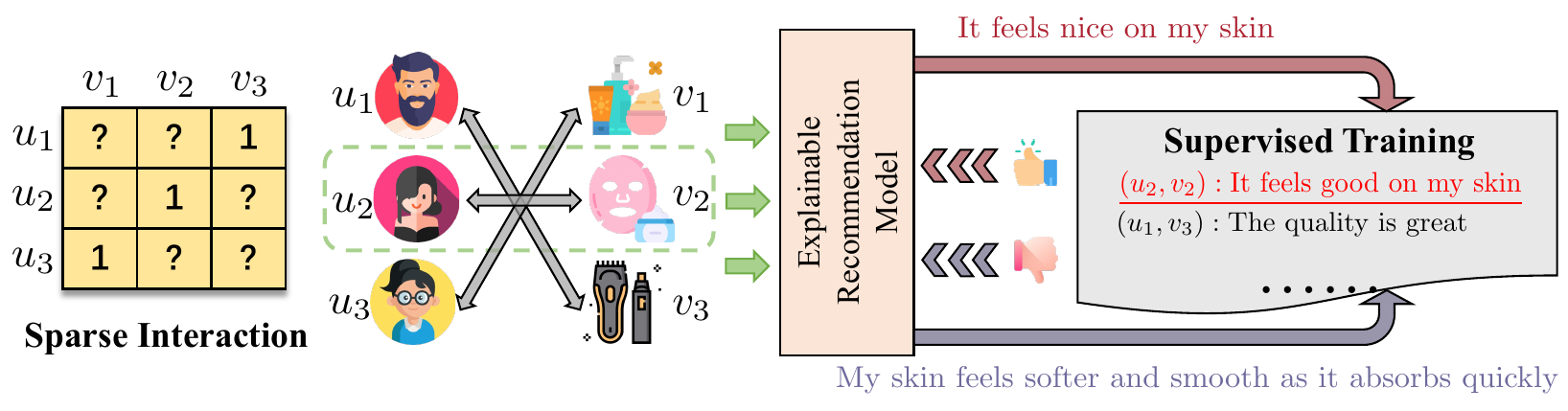}
\caption{Motivation Example. The left half illustrates the longstanding issue of data sparsity in ERS, where most user-item interactions are unobserved or unavailable. The right half demonstrates the supervised training paradigm employed by traditional explainable recommendation models, which blindly maximizes the likelihood of target sentences in the given corpus, despite the potential quality of model-generated texts to be better than the ground truths.} 
\label{fig:introduction} 
\end{figure}

In response to the limitations, we innovatively propose to reframe the optimization process within a reinforcement learning framework, integrating human feedback to dynamically guide the model's training process. This paradigm shift from static textual fitting to interactive optimization, which provides effective signals for unobserved user-item pairs and generate sentences more aligned with human tastes. Although this seems to be an intriguing and novel idea, there are actually many challenges:
\begin{enumerate}[label=C\arabic*:, leftmargin=*]
    \item \textbf{Prohibitive Costs of Human Involvement.} Direct human participation in obtaining real-time feedback during the training process is clearly impractical and inaccessible. The expensive costs in terms of both time and resources make our initial idea unfeasible for large-scale corpus.
    \item \textbf{Complicated Reward Mechanism Design.} Due to the heterogeneous natures of different user preferences and item characteristics, establishing a uniform and fixed reward scoring mechanism may struggle to deeply understand measure criteria of explanation quality under the reinforcement learning optimization framework. Furthermore, how to reasonably incorporate valuable context information within the reward function is also important for effective and personalized rewards.
    \item \textbf{Multi-Perspective Explanation Optimization.} The evaluation criteria of the explanation quality is usually multifaceted from the human view, encompassing different perspectives such as persuasiveness and informativeness. Furthermore, these measure metrics are even mutually contradictory to some extent. For example, a sentence like ``This skincare product is hot!'' might be relatively persuasive but lacks meaningful information. Conversely, some explanations, such as `This is a 2 fl oz bottle of our Acne Stop Skin Care, approximately one months supply.', include detailed product information, yet it does not inspire any desire to click and purchase among consumers. Thus, how to design multi-perspective optimization mechanism also challenges our idea.
\end{enumerate}

 To address the above challenges, we propose a novel optimization framework for explainable recommendation, named as \textbf{HF4Rec} for short. The basic idea of HF4Rec is to leverage the powerful anthropomorphic capability of large language models (LLMs) as human simulators, which drive explanation enhancements through human-like information feedback. Specifically, we firstly introduce human preference-based prompt prototypes to automatically elicit customized reward criteria for different user-item interactions, aiming to refine reasoning logic for the personalized scoring mechanism of explainable recommendation task. Then, to overcome the challenges posed by lengthy and noisy user history interactions on the large model's ability to understand user interests, we propose target-aware retrieval strategy to extract key information from history behaviors. At last, we design the principle-guided Pareto optimization framework to improve the multi-perspective explanation quality in a fully differentiable manner. This approach effectively overcome the potential conflicts between different evaluation metrics, such that HF4Rec can simultaneously maximize all the objective functions. In the experiment, we apply our framework to different base models on four real-world public datasets to demonstrate the effectiveness of the proposed HF4Rec.

In a summary, the main contributions of this paper are as follows:
\begin{itemize}[leftmargin=*]
    \item We innovatively reframe the traditional supervised text-fitting paradigm into human-like feedback-driven reinforcement learning optimization method, emphasizing the essential human-centered requirements for explainable recommendation task.
    \item We introduce large language models as human simulators and meticulously design personalized explanation scoring mechanism aligned with human preference as reward functions. This further explores more augmented unobserved user-item interactions, achieving robust explainable recommendation performance.
    \item We propose a principled Pareto optimization framework to enhance the explanation quality across multiple perspectives, effectively mitigating potential conflicts among different evaluative aspects.
    \item We conduct extensive experiments based on different base models on four real-world public datasets, consistently outperforming baselines in terms of objective textual similarity metrics and subjective human-view multi-perspective explanation quality. 
\end{itemize}

In what follows, we organize the rest of our paper as follows: Section 2 introduces preliminary knowledge. Section 3 presents our proposed Human-Like Feedback-Driven optimization framework, and Section 4 presents the experimental results and analysis. Section 5 discusses the related work of traditional and LLM-augmented explainable recommendation, followed by final conclusions and future work in Section 6.

\section{Preliminary}
\subsection{Notation and Task Formulation}
Before delving into the details of our proposed method, we first introduce basic notations and formally define the explainable recommendation task. In real-life scenarios, obtaining genuine explanations for user recommendations are quite difficult. Following existing work~\cite{zhang2023recommendation,li2023personalized,li-etal-2021-personalized}, we extract recommendation explanations from user reviews, ensuring that each explanation contains at least one item feature (e.g., taste) to ensure explanation quality. Formally, we define a user set $\mathcal{U}$ and item set $\mathcal{I}$. For observed user-item pairs $\mathcal{D}=\{(u,v)|u\in \mathcal{U}, v\in \mathcal{I}\}$, each data example $(u,v)$ is associated with $r_{u,v}$ and $\mathbf{x}_{u,v}$, where $r_{u,v}$ is the user rating within the range of [1,5] and $\mathbf{x}_{u,v}$ denotes the review commented by user $u$ on item $v$. The review $\mathbf{x}_{u,v}$ consists of a sentence with length $l_{u,v}$, \emph{i.e.}, $\mathbf{x}_{u,v}=\{x^1_{u,v},x^2_{u,v},\dots,x^{l_{u,v}}_{u,v}\}$, and we define the word vocabulary as $\mathcal{V}$. In addition, we denote the history items interacted with user $u$ as $\mathbf{h}_u$, and each item is characterized by several feature information, such as category, product description, title, etc. We provide the key notations used throughout the paper in Table~\ref{tab:notations}.

\begin{table}[t]
    \centering
    \caption{Notations}
    \begin{tabular}{cl}
        \hline\hline
        \textbf{Symbol} & \multicolumn{1}{c}{\textbf{Description}}\\
        \toprule 
        $\mathcal{U}, \mathcal{I}$ & the user set and the item set \\
        $\mathcal{D}, \tilde{\mathcal{D}}$ & the observed and unobserved interaction data \\
        $r_{u,v}, \hat{r}_{u,v}$ & the ground truth and predicted rating for $(u,v)$ pair \\
        $\mathbf{x}_{u,v}, \hat{\mathbf{x}}_{u,v}$ & the ground truth and predicted explanation for $(u,v)$ pair\\
        $\mathcal{V}$ & the word vocabulary\\
        $\mathbf{h}_u$ & the history interaction of user $u$ \\
        $l_{u,v}$ & the sentence length of user $u$'s review to item $v$\\
        $\psi$ & the reward value \\
        $A$ & the advantage value \\
        $N_u^{C_i}$ & the number of interactions user $u$ has with items of category $C_i$\\
        $p_u^{C_i}$ & the sampling probability of user $u$ in item category $C_i$ \\
        $Z$ & the total number of item categories \\
        $J$ & the number of explanations explored for each user-item pair\\
        \hline\hline
    \end{tabular}
    \label{tab:notations}
\end{table}

\noindent
\textbf{Task Formulation of Explainable Recommendation.} In essence, explainable recommendation methods leverage multi-task learning framework. More specifically, given the user and item identifications, the objectives are twofold as follows: (1) It is required to accurately predict the rating $\hat{r}_{u,v}$ that represents user $u$'s preference towards item $v$. (2) The well-trained model should concurrently generate a textual explanation $\hat{\mathbf{x}}_{u,v}$, which serves to illuminate the rational behind the recommendation.

\subsection{Traditional Explainable Recommendation}
In traditional explainable recommendation models (ERMs), the model optimization is conducted based on the supervised learning paradigm using observed training data. Here, we outline the process in two main phases as follows:
\begin{itemize}[leftmargin=*]
    \item \textbf{Training Phase.} The model employs a multi-task learning mechanism to simultaneously perform rating prediction task and explanation generation task. Specifically, for the rating prediction task, ERMs use the Mean Square Error (MSE) as the regression loss function:
    \begin{equation}
        \mathcal{L}^R = \frac{1}{|\mathcal{D}|}\sum_{(u,v)\in \mathcal{D}} (r_{u,v}-\hat{r}_{u,v})^2,
    \end{equation}
    where $r_{u,v}$ and $\hat{r}_{u,v}$ represent the ground truth and predicted rating for the given $(u,v)$ pair, and $|\mathcal{D}|$ denotes the size of observed training data.
    For the explanation generation task, we adopt the Negative Log Likelihood (NLL) as the loss function to compute the average loss of the explanations generated for the user-item pairs observed in the training corpus, which is formulated as follows:
    \begin{equation}
        \mathcal{L}^E = \frac{1}{|\mathcal{D}|}\sum_{(u,v)\in \mathcal{D}} \frac{1}{l_{u,v}}\sum_{t=1}^{l_{u,v}}-\log h(\hat{x}_{u,v}^t=x_{u,v}^t | u,v,\mathbf{x}_{u,v}^{1:t-1}),
    \end{equation}
    where $h(\cdot)$ can be implemented with any off-the-shelf sequential architecture like GRU, LSTM and Transformer. ``$\hat{x}_{u,v}^t=x_{u,v}^t$'' denotes the probability that the ERM predicts the next token at position $t$ as $x_{u,v}^t$ in the ground truth explanation. Given the user, the item, and the preceding tokens $\mathbf{x}_{u,v}^{1:t-1}$, traditional methods follow such teacher-forcing mechanism to learn explainable knowledge. In fact, the cross-entropy based learning objective aims to maximize the log-likelihood of the target sentence, overlooking the real quality of the generated explanations if they do not strictly align with the predefined tokens, which is the central problem that our paper focuses on.
    \item \textbf{Inference Phase.} Given the user and item identifies, the ERM starts the generation process with the begin-of-sentence token <bos>. It greedily outputs the next token that has the highest probability, then merges this token with the preceding sequence to form the new input. This iterative process continues to generating subsequent words until the sentence either contains with the end-of-sequence token <eos> or reaches the predefined maximum sequence length. While for the rating prediction task, different explainable models have their unique implementation strategies. For instance, in the PETER~\cite{li-etal-2021-personalized}, the scalar value obtained from the Feed-Forward Network (FFN) applied to the output at the first position of the last layer represents the final predicted rating. 
\end{itemize}

\subsection{Policy-based Reinforcement Learning}
In typical reinforcement learning framework, the overall environment is modeled using the following quadruple: (1) \textbf{State Space ($\mathcal{S}$)}: All possible states for guiding the agent's decision-making process. (2) \textbf{Action Space ($\mathcal{A}$)}: All possible actions that the agent can take. (3) \textbf{Transition Probability ($\mathcal{P})$}: The probability that the agent moves to state $s'$ from state $s$ after taking action $a$, denoted as $p(s'|s,a)$. (4) \textbf{Reward Function ($\mathcal{R}$)}: The reward the agent receives for transitioning from state $s$ to state $s'$ after taking action $a$. It defines the Markov Decision Process (MDP) in reinforcement learning, with the objective to develop a policy $\pi(a|s)$ that maximizes the expected total return. In general, Reinforcement Learning (RL) techniques can be categorized into value-based~\cite{watkins1992q,van2016deep,mnih2015human} and policy based~\cite{schulman2015trust,konda1999actor,DBLP:journals/corr/LillicrapHPHETS15} approaches. Specifically, the former computes a Q-value for each state-action pair, facilitating the agent to choose action with the highest reward. On the other hand, the policy-based methods directly seek an optimal policy by parameterizing the policy itself, which is more advantageous in environments with large action spaces.

Given the expansive action space in the explainable recommendation task (in specific, containing 50260-sized token candidates as the action space in our experiments) and grounding on empirical performance considerations, we adopt the policy-based reinforcement learning approach. Formally, the parameter updates in policy-based models can be expressed as follows:
\begin{equation}
    \label{naive_pg}
    \theta_{t+1} = \theta_t + \gamma \nabla_{\theta} \log \pi_\theta (a_t | s_t) \psi_t,
\end{equation}
where $\theta$ denotes the parameters of the policy network, $\gamma$ is the learning step, $\nabla_{\theta} \log \pi_\theta (a_t | s_t)$ is the gradient tensor of the logarithm of the policy probability $w.r.t.$ the parameters $\theta$, and $\psi_t$ is the reward received after taking action $a_t$. The optimization goal is to adjust the policy network in directions that could maximize the expected reward.

\section{METHODOLOGY}
In this section, we present the proposed approach, a Human-Like Feedback-Driven Optimization Framework for Explainable Recommendation (named as HF4Rec for short). The overview of HF4Rec is demonstrated in Figure~\ref{fig:framework}. The general idea of HF4Rec is to utilize large language models (LLMs) as human simulators within a reinforcement learning framework for achieving dynamic interactive optimization mechanism. By adopting such human-in-loop method, we can effectively improve the multi-perspective quality of recommendation explanations. Specifically, by reframing the explainable recommendation into a policy-based reinforcement learning approach, HF4Rec employs an off-policy ``data collection and model updating'' loop for efficient model parameter optimization. In the phase of data collection, to overcome the challenges such as data sparsity and the heterogeneity of user-item interactions, we propose the difficulty-aware sampling strategy and retrieval-augmented customized reward prompt. In the second phase, we adopt principled Pareto optimization strategy to improve multi-perspective quality of recommendation explanations, effectively mitigating the potential conflicts between different evaluation criteria.

In the following, we elaborate on the methodology design of the proposed HF4Rec. 

\begin{figure}[t]
\centering
\includegraphics[width=\textwidth]{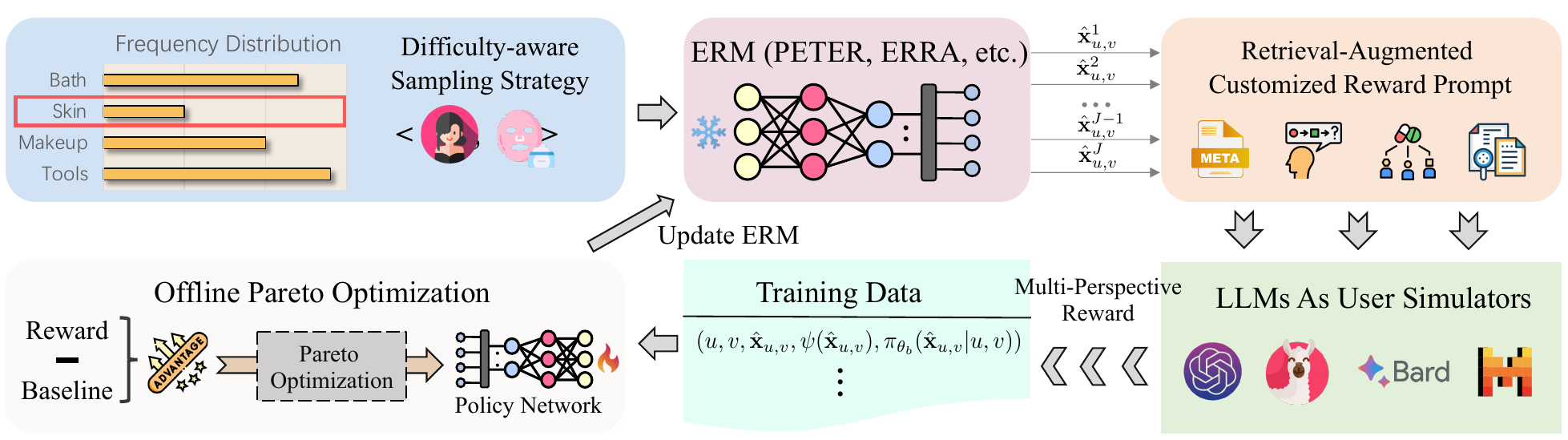}
\caption{The Overview of the proposed HF4Rec.} 
\label{fig:framework} 
\end{figure}

\subsection{Off-Policy RL Framework for Explainable Recommendation}
Traditional explainable recommendation algorithms mostly rely on supervised learning techniques, \emph{i.e.}, focusing on maximizing the likelihood of observed user reviews in the corpus. However, the paradigm of text-fitting essentially limits the model's ability to explore and evaluate unobserved but potentially valuable generated explanations. To overcome this constraint, we introduce a reinforcement learning framework that leverages human feedback to update policy network, which allows the model to update its parameters based on the human-judged explanation quality.
Inspired by the emergence of large language models demonstrating unparalleled abilities in language understanding~\cite{touvron2023llama,DBLP:conf/iclr/ZengLDWL0YXZXTM23,brown2020language,huang2022towards,DBLP:conf/iclr/ZhouSHWS0SCBLC23} and their exceptional performance in role-playing humans across various tasks~\cite{yoon2024evaluating,wang2023user,aher2023using}, we can utilize available LLMs as human simulators to attain the human-in-the-loop iterative process, avoiding the resource and time costs of real-human involvement. 
In the next, we will introduce the proposed off-policy reinforcement learning framework, which primarily addresses the following critical issues: (1) How can we effectively explore unobserved data and generate diverse recommendation explanations to enhance the model robustness and generalization ability? (2) How can human-like feedback information be efficiently utilized to optimize model parameters? (3) How can HF4Rec achieve the stable model learning process under policy-based RL optimization framework?

\subsubsection{\textbf{Difficulty-aware Sampling Strategy}}\label{sec:difficulty_aware_sampling}
To tackle the decreased model robustness due to the data sparsity and popularity bias, we propose a difficulty-aware sampling strategy, which serves for the subsequent data collection phase. Inspired by the work~\cite{ovaisi2022rgrecsys}, an ideal robust recommendation algorithm should perform well across all subgroups, especially within groups characterized by limited interactions. Therefore, our general idea is to boost the sampling chances for item categories with fewer interactions while decreasing them for those with higher record numbers. Formally, suppose that $(N_u^{C_1}, N_u^{C_2}, \dots, N_u^{C_Z})$ represent the interaction frequency for user $u$ with items across categories $(C_1, C_2, \dots, C_Z)$, where $Z$ is the total number of item categories. To enhance average performance in under-represented categories, we define the sampling probability of the item category $C_i$:
\begin{equation}
    \label{eq:prob_sampling}
    p_u^{C_i} = \frac{1/\log(N_u^{C_i}+2)}{\sum_{z=1}^Z 1/\log(N_u^{C_j}+2)},
\end{equation}
where the logarithm function is used to prevent extreme frequency distributions from causing a minority of item categories to dominant the sampling process, and the constant is to avoid numerical instability. While for items within the same category, we adopt the random sampling method to simplify the overall complexity, and leaving further complex methods for the future work.

\subsubsection{\textbf{Off-Policy Optimization Pipeline}}
Upon determining the user-item training samples, we can feed the data into the human simulators, which yields reward estimations for the model's generated explanations. Then, these rewards are used to update the model parameters using the policy gradient method as specified in Eq.~(\ref{naive_pg}). However, this method actually faces two weaknesses as follows: (1) \textbf{Serial Inference Latency}. Since both the behavior policy and target policy are the same one, and input data continuously interact with the large language model, there is a serial buildup of inference delays, greatly increasing the overall training time. (2) \textbf{Risk of Converging to Local Optimal}. Traditional online policy gradient methods prioritize interactions under the current policy, overlooking the valuable knowledge available from past policies. This inadequate data utilization could steer the model towards a local suboptimal position, causing the model miss opportunities to develop better policy parameters.

To address the above issues, we propose the following two-phase pipeline:
\begin{enumerate}[leftmargin=*]
    \item \textbf{Data Collection Phase}. During this phase, we initially derive all training data from observed interactions $\mathcal{D}$ and unobserved user-item interactions $\tilde{\mathcal{D}}$ obtained with the proposed data sampling strategy(detailed in Sec~\ref{sec:difficulty_aware_sampling}). After that, user-item pairs are fed into the pretrained explainable recommendation model to generate explanation predictions, which are then formatted using predefined prompt templates (detailed in Sec~{\color{red}{\ref{sec:prompt}}}), and the human simulator (\emph{i.e.}, LLM) receives these input texts and assigns rewards based on the explanation quality like humans. Subsequently, all trajectories are stored in a replay buffer $\mathcal{B}$ for the next phase (\emph{i.e.}, model updating), where the basic structure of each data entry consists of $(u,v, \hat{\mathbf{x}}_{u,v},\psi({\hat{\mathbf{x}}_{u,v}}),\pi_{\theta_{b}}(\hat{\mathbf{x}}_{u,v}|u,v))$. It is worth noting that it does not involve updating model parameters in this phase, thus we can parallelly carry out the entire data collection process on a large scale, dramatically speeding up the overall pipeline.
    \item \textbf{Model Updating Phase}. In this phase, the trajectory data collected in the last phase are utilized to update the policy network, \emph{i.e.}, the explainable recommendation model. However, Different from naive online policy gradient methods, the behavior policy that interacts with the environment is not the same as the target policy to be optimized, and does not update concurrently with the target policy. This discrepancy leads to an undesired inconsistency in the example distribution between two policies. To address this data bias, we utilize a surrogate objective function that corrects data distribution by incorporating important sampling weights. In addition, inspired by Proximal Policy Optimization (PPO)~\cite{schulman2017proximal}, we use clipping technique to discard samples that fall outside the trust region, preventing large deviations in policy optimization. Specifically, we redefine the objective function as follows:
    \begin{equation}
        \label{eq:obj}
        \underset{\theta}{\text{arg} \max}\ \mathbb{E}_{\pi_{\theta_{b}}} \left[ 
        \min\left( \frac{\pi_\theta(\hat{\mathbf{x}}_{u,v}|u,v)}{\pi_{\theta_{b}} (\hat{\mathbf{x}}_{u,v}|u,v)} \psi(\hat{\mathbf{x}}_{u,v}), \text{clip}\left(\frac{\pi_\theta(\hat{\mathbf{x}}_{u,v}|u,v)}{\pi_{\theta_{b}}(\hat{\mathbf{x}}_{u,v}|u,v)} ,1-\epsilon, 1+\epsilon\right)
        \psi({\hat{\mathbf{x}}_{u,v}})\right)\right],
    \end{equation}
    where $\theta$ and $\theta_b$ represent the network parameters of the target policy and behavior policy, respectively, and the hyper-parameter $\epsilon$ stands for the clipping range. All reward values correspond to the actions generated by the behavior policy $\pi_{\theta_b}$.
\end{enumerate}

Moreover, we can also adopt multiple iterations of data collection and model updating, mitigating the negative effects caused by excessive distribution gap between the behavior policy and the target policy. Note that our optimization pipeline differs from on-policy methods as we leverage the replay buffer to recycle gathered trajectories across multi-turn iterations, thereby enriching the model’s knowledge base and boosting the generalization ability.

\subsubsection{\textbf{Advantage Estimation for Stable Model Learning}}\label{sec:adv}
The objective function based on reward values may often lead to learning instability because of the high estimation variance and noise~\cite{mnih2016asynchronous,10.5555/3042573.3042600}. To address this issue, we propose an advantage function for more stable model training process. Specifically, utilizing the probability-based token prediction method within an explanation recommendation model, we can explore $J$ different explanations for the same user-item input pair. Then, the average value of the rewards can be viewed as approximated estimation of the state value, which allows us to derive the advantage value as follows:
\begin{equation}
    \label{eq:adv_value}
    A(\hat{\mathbf{x}}_{u,v}^i) = \psi({\hat{\mathbf{x}}_{u,v}^i}) - \frac{1}{J} \sum_{j=1}^J \psi({\hat{\mathbf{x}}_{u,v}^j}),
\end{equation}
where $\hat{\mathbf{x}}_{u,v}^i$ denotes $i$-th generated explanation of $(u,v)$ pair.
Compared with reward values in Eq~(\ref{eq:obj}), advantage values can better guide the policy network to more rapidly differentiate the different quality of generated sentences. In particular, $A(\cdot)>0$ means that the text quality of the $i$-th action is above the average level, whereas $A(\cdot)<0$ signifies the below-average quality. This mechanism encourages the ERM to update parameters in the positive direction for the former and in the opposite direction for the latter. Furthermore, incorporating advantage function into the policy gradient optimization can reduce the high variance, thereby upgrading the model's learning stability.

\subsection{Retrieval-Augmented Customized Reward Prompt}\label{sec:prompt}
To effectively harness large language models (LLMs) for predicting human-like reward scores, we have to overcome the following two challenges in prompt design. 
\textbf{Firstly}, user interaction histories often include noisy data (such as friend recommendation, proxy purchasing, clickbait, etc.). Thus this could potentially dilute the true user preferences and interests, impeding the LLM's ability to precisely extract relevant information from long-context input. \textbf{Secondly}, since the behaviors between different users and items exhibit significant pattern heterogeneity, establishing a uniform and fixed criteria for reward measure may struggle to encompass the complexity and diversity of multi-perspective explanations. Therefore, aligning human simulators with the personalized human standards also poses a tricky challenge. To this end, we propose a novel retrieval-augmented customized reward prompt, which mainly consists of the two steps: (1) retrieval-augmented user information extraction. (2) customized human-induced scoring mechanism.  

\subsubsection{\textbf{Retrieval-Augmented User Information Extraction}}
The first step to interact with the human-like simulators is reformulating the user and item information in a structured description conducive to large language model processing. To achieve this, we transform related features into prepared hard coded prompt template using ``\{key: value\}'' pairs. This format organizes important features such as item category, title, description, etc., in a dictionary format that is easily parsed and understand by LLMs. And for items in user behavior history, we add user review as extra item content description. Then we give a specific transformation example as shown in Prompt~\ref{prompt:user_item_info}: 

\begin{mybox}[prompt:user_item_info]{The Description of Item Information}
\{\{

\quad ``Item Title'': `AINHOA Luxe Hydra Luxe Absolute Gift Set (Body Cream, Anti-Aging Cream)'\ ,

\quad ``Item Description'': `AINHOA is an award-winning series of professional beauty products for use in the comfort of your home.  Pleasurable, convenient, made from the highest quality ingredients, these are Spa-quality products at affordable prices ...'\ ,

\quad ``Item Category'': `Skin Care'\ ,

\quad [Optional Feature] ``Recommendation Explanation'': `While the cream felt great going on and soaked into my skin quickly.'

\}\}
\end{mybox}

To extract valuable information from the lengthy and noisy user interaction history, we introduce a target-aware retrieval-augmented method. Specifically, we leverage the OpenAI embedding model\footnote{text-embedding-3-small:\url{https://platform.openai.com/docs/api-reference/embeddings}} to acquire sentence embeddings for all interacted items and the target item. By doing this, we can access the semantic relevance between each candidate behavior and the target item by calculating the cosine similarity between the corresponding  semantic embeddings. Afterwards, we can identify and select the top-$K$ item with the highest similarity scores to stand for target-aware user description, which effectively reflects the user's interest preference.

\subsubsection{\textbf{Customized Human-Induced Scoring Mechanism}}
Ideally, to define personalized reward scoring mechanisms that align with different human tastes to explanation quality, we could craft human-written examples for each user-item pair to cover different reward patterns in a few-shot manner. However, such an approach would be prohibitively costly due to the extensive manual effort required. In real-world scenarios, user presences are usually diverse within a high-level feature space, and interaction history sequences from different users can share similarity in such feature space. 
Inspired by this idea, we can initially group users into $Y$ clusters based on their extracted information description using any clustering algorithms such as $K$-means~\cite{hartigan1979algorithm}. This step helps to find distinct user groups with similar styles. Subsequently, we randomly select one representative user from every cluster and manually create few-shot scoring examples as prompt prototypes.

Following that, in order to further refine the instance-specific customized knowledge for achieving more comprehensive and personalized understanding of each user-item pair, we utilize the prompt prototype corresponding to the user's cluster, and then induce the large language model to create a new customized reward few-shot context. To better illustrate the detailed implementation, we present an example as demonstrated in Prompt~\ref{prompt:context_prompt}, where \textcolor{orange}{words in orange} signifies placeholders for the retrieved user in information, item description, and few-shot examples.

\begin{mybox}[prompt:context_prompt]{Human-Induced Few-shot Context Generation}
Given the following examples:


Here is information on items previously interacted with by a user, along with their corresponding recommendation explanations:

\textcolor{orange}{\{User interaction history with top-$K$ most similar item information\}}

Then, information about another item is given as follows:

\textcolor{orange}{\{Target item information\}}

We present several user's recommendation explanations about this item, along with two scores (1-3) based on

a. Informativeness: Help user learn more about the recommended item;

b. Persuasiveness: Help user want to buy the item.

{\color{orange}
[

\quad\{\{
    
\quad\quad``Recommendation Explanation'': ``The scent is the pleasant light scent of the other Clear shampoos.'',
        
\quad\quad``Informativeness'': 3,
        
\quad\quad``Persuasiveness'': 3,
        
\quad\quad``Reason'': ``Informativeness (3): It directly informs information about the product's benefits. 

\quad\quad Persuasiveness (3): The description of the scent as `pleasant light' is positive and appealing, making the product desirable for those who prefer subtle, agreeable fragrances in their hair care products.''
        
\quad \}\},

\quad \textcolor{green}{\# More examples to cover other scores}

]}


A new user has interacted with the following items:

\textcolor{orange}{\{User interaction history with top-$K$ most similar item information\}}

And we have information about the target item:

\textcolor{orange}{\{Target item information\}}

Based on the above examples, generate three explanation examples and corresponding reasons for this user about the target item. 

**IMPORTANT NOTE**

1. Ensure output MUST adhere to the following format:

\{\{

\quad``Explanation'': \# less than 15 words,
    
\quad``Informativeness'': \# [1-3],
    
\quad``Persuasiveness'': \# [1-3],
    
\quad``Reason'': \# less than 100 words
    
\}\}.

2. The generated examples must cover all score levels, that is, 1, 2, and 3.

3. Please generate results directly, without any additional thinking process.

\end{mybox}

Furthermore, as illustrated in Prompt~\ref{prompt:context_prompt}, we develop a Three-Tier Instruction Enhancement strategy for fully exploiting the chain of thought potentiality in large language model. Specifically, to begin with, we explicitly define the quality standards as a transport reward guideline with respect to informativeness and persuasiveness of generated explanations. Secondly, customized few-shot contexts equip LLM with refined cases across various scores, facilitating the implicit understanding of intricate reward mechanism. At last, we introduce in-depth reasons where quality scores of the explanation example are justified in detail for aligning with user preference. This better assists the LLM to learn rationale behind different scores, awakening the LLM's logical reasoning ability to accurately infer reward values.

After deriving the customized context prompts, we proceed to integrate the available information and feed them into the LLM to produce human-like reward outputs. More specifically, the input prompt comprises several key components: (1) Task description; (2) User interest modeling; (3) Target Item Construction; (4) Three-Tier Instruction Enhancement. For provide clearer descriptions, we present the reward prompt template as shown in Prompt~\ref{prompt:reward_prompt}.

\begin{mybox}[prompt:reward_prompt]{Reward Prompt}
You need to act as an explainable recommender system, providing explanations used to help the user understand why an item was recommended.

A user has interacted with the following items:

\textcolor{orange}{\{User interaction history with top-$K$ most similar item information\}}

And we have information about the target item:

\textcolor{orange}{\{Target item information\}}

Here are some explanations about recommending the target item to this user, and scores (from range [1-3]) are given from the perspectives of 

a. Informativeness: Help user learn more about the recommended item;

b. Persuasiveness: Help user want to buy the item.

\#\#\# Example BEGIN \#\#\#

\textcolor{orange}{\{Customized Few-shot Examples\}}

\#\#\# Example END \#\#\#

A new recommendation explanation is provided as follows:

\textcolor{orange}{\{Generated Recommendation Explanation\}}

Learn from the above examples to evaluate new explanation.

**IMPORTANT NOTES**

1. The output format MUST be:

\{\{

\quad ``Informativeness'': \# [1-3],
    
\quad ``Persuasiveness'': \# [1-3]
    
\}\}.

2. Please ensure that the scoring for the two perspectives of explanation quality is done independently, without influencing each other.

\end{mybox}

\subsection{Multi-Perspective Explanation Optimization Based on Pareto Optimality}
The quality assessment of recommendation explanations often involves various perspectives, such as informativeness and persuasiveness. However, as highlighted in Section~\ref{sec:intro}, these evaluation metrics do not always align harmoniously. For example, while an explanation that emphasizes persuasiveness may captivate the user's attention, it could omit critical details regarding the product's unique characteristics or user experience. On the other hand, an explanation that meticulously details product attributes may not always effectively engage the user or motivate purchasing decisions. Therefore, we treat the task of improving multi-perspective quality of recommendation explanations as a multi-objective optimization problem, aiming to strike a balance between different aspects.

Traditional methods for addressing multi-objective task typically require presetting weights for each objective function. However, these methods usually encounter the following three limitations: (1) They suffer from tedious trial-and-error, which consumes expensive time and resource costs. (2) The static weights fail to dynamically adjust during training, which could result in suboptimal performance as different objectives may vary greatly in scale and importance. (3) The simple weighted sum approach does not guarantee balanced improvements across all aspects. 

To overcome the above drawbacks, we propose a multi-perspective explanation optimization method based on Pareto Optimality~\cite{ribeiro2012pareto,censor1977pareto,hochman1969pareto,luc2008pareto,rame2023rewarded}, which strikes a dynamic balance between different objectives under theoretical guidance. In specific, assume there are $M$ objective functions (such as informativeness and persuasiveness) in ERS denoted as $\{\mathcal{F}_1(\theta),\mathcal{F}_2(\theta),\dots,\mathcal{F}_M(\theta)\}$, where $\mathcal{F}_i(\theta)$ represents the $i$-th objective function. In this framework, we can derive the respective objective functions by substituting the rewards in Eq.~(\ref{eq:obj}) with the advantage values corresponding to each quality aspect. And in the next, we formalize several definitions \emph{w.r.t.} Pareto optimization as follows:
\begin{enumerate}[label=\arabic*.]
    \item \textbf{Pareto Domination.} A solution $\theta_1$ dominates another solution $\theta_2$ if and only if $\mathcal{L}_i(\theta_1)\geq \mathcal{L}_i(\theta_2)$ for all objectives, and there exists at least one objective $j$ (where $1\leq j \leq M$) such that $\mathcal{L}_j(\theta_1) > \mathcal{L}_j(\theta_2)$.
    \item \textbf{Pareto Optimality.} A solution $\theta$ is defined as Pareto optimality when it cannot be further improved on any objective without degrading others. In other words, there is no other solution that dominates the solution $\theta$.
    \item \textbf{Pareto Front.} For multi-objective tasks, the Pareto front represents a set of all Pareto optimal solutions.
\end{enumerate}

To balance the conflicts between different objectives, we adopt the scalrization method~\cite{lin2019pareto,sener2018multi,xie2021personalized,chen2021reinforcement} to average $M$ functions using weights $\bm{\omega}=\{\omega_1,\omega_2,\dots,\omega_M\}$, then we can derive the loss function as follows:
\begin{equation}
    \mathcal{L}(\theta) = - \sum_{i=1}^M \omega_i \mathcal{F}_i(\theta).
\end{equation}
Here, the weights $w$ can be determined by solving the following quadratic programming problem:
\begin{equation}\label{eq:weights}
\begin{gathered}
    \min_{\bm{\omega}} || \sum_{i=1}^M \omega_i \nabla_{\theta} \mathcal{F}_i(\theta) ||_2^2\\
    s.t. \quad \mathbf{1}^\top \bm{\omega} = 1, \\
    \omega_i \geq 0, \quad \forall i \in [1,M], \\
    \mathbf{h}_q^\top\bm{\omega} \geq \beta_q, \quad  \forall q \in [1,Q],
\end{gathered}
\end{equation}
where $\mathbf{1}$ is an all-ones vector. The sets $\{\mathbf{h}_1,\mathbf{h}_2,\dots,\mathbf{h}_Q\}$ and $\{\beta_1,\beta_2,\dots,\beta_Q\}$ represent non-negative $M$-dimension prior preference vectors and scalar values, respectively, where the element-wise sum of $\mathbf{h}_q$ equals one. Then, we can easily incorporate our human prior knowledge about the different aspects of expiation quality by setting the corresponding $\mathbf{h}_q$ and $\beta_q$. Specifically, $\mathbf{h}_q$ is defined as a unit vector while $\beta_q$ is set as a predefined bias to quantify the relative importance corresponding $i$-th evaluation perspective.

Different from the previous methods, our scalar weights are dynamically and adaptively updated during the training process without human intervention, making the entire framework end-to-end. Thereafter, we further elucidate the theoretical analysis why the weights of objective functions solved by Eq.~(\ref{eq:weights}) can lead to the Pareto-optimality solution.

\begin{theorem}
The weights $\bm{\omega}$ derived from the solution Eq.~(\ref{eq:weights}) ensure that either one of the following conditions is satisfied:
\begin{enumerate}[label=\roman*.]
    \item The solution to this minimal optimization problem is 0, satisfying the Karush-Kuhn-Tucker (KKT) conditions, which confirms that a local Pareto optimal solution is achieved.
    \item The solution gives a gradient direction capable of improving all objectives simultaneously.
\end{enumerate}
\end{theorem}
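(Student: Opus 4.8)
The plan is to treat Eq.~(\ref{eq:weights}) as a convex quadratic program and run the whole argument as a dichotomy on the minimizing combined gradient. Write $g_i = \nabla_\theta \mathcal{F}_i(\theta)$, collect the feasible weights into the polytope $\Omega = \{\bm\omega : \mathbf{1}^\top\bm\omega = 1,\ \omega_i \geq 0\ \forall i,\ \mathbf{h}_q^\top\bm\omega \geq \beta_q\ \forall q\}$, assume $\Omega$ is nonempty, and let $\bm\omega^\star \in \Omega$ attain the minimum of the convex objective $f(\bm\omega) = \|\sum_i \omega_i g_i\|_2^2$; set $\mathbf{d}^\star = \sum_i \omega_i^\star g_i$. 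Since $f$ is convex and $\Omega$ is a compact convex set, a minimizer exists and is characterized by the variational inequality $\nabla f(\bm\omega^\star)^\top(\bm\omega - \bm\omega^\star) \geq 0$ for every $\bm\omega \in \Omega$, where $\nabla f(\bm\omega^\star) = 2\,[\,g_1^\top\mathbf{d}^\star,\dots,g_M^\top\mathbf{d}^\star\,]^\top$. First I would record this inequality, because both alternatives of the theorem fall out of it once I split on whether $\mathbf{d}^\star = 0$.

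For condition (i), I would assume $\mathbf{d}^\star = 0$. Then $\sum_i \omega_i^\star g_i = 0$ with $\omega_i^\star \geq 0$ and $\sum_i \omega_i^\star = 1$, so the weights form a nontrivial nonnegative convex combination of the objective gradients that vanishes. This is precisely the Karush--Kuhn--Tucker / Fritz--John first-order stationarity condition for the multi-objective program: no direction can have a strictly positive inner product with every $g_i$ simultaneously, for if some $\mathbf{v}$ satisfied $g_i^\top\mathbf{v} > 0$ for all $i$ then $0 = \mathbf{d}^{\star\top}\mathbf{v} = \sum_i \omega_i^\star (g_i^\top\mathbf{v}) > 0$, a contradiction. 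Hence $\theta$ admits no common first-order ascent direction and is a local Pareto-optimal (Pareto-stationary) point, which is exactly case (i).

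For condition (ii), I would assume $\mathbf{d}^\star \neq 0$ and extract a common ascent direction directly from the variational inequality. Using $\sum_i \omega_i^\star(g_i^\top\mathbf{d}^\star) = \|\mathbf{d}^\star\|_2^2$, the inequality rearranges to $\sum_i \omega_i(g_i^\top\mathbf{d}^\star) \geq \|\mathbf{d}^\star\|_2^2$ for all $\bm\omega\in\Omega$. Evaluating at the vertex $\bm\omega = e_j$ then gives $g_j^\top\mathbf{d}^\star \geq \|\mathbf{d}^\star\|_2^2 > 0$, so each directional derivative $\nabla_\theta\mathcal{F}_j(\theta)^\top\mathbf{d}^\star$ is strictly positive; updating $\theta \leftarrow \theta + \eta\,\mathbf{d}^\star$ with small $\eta$ increases all $M$ objectives to first order, which is case (ii). Together with case (i), exactly one of the two alternatives holds, completing the dichotomy.

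The main obstacle is the preference constraints $\mathbf{h}_q^\top\bm\omega \geq \beta_q$, which are exactly what make the vertex substitution $\bm\omega = e_j$ illegitimate in case (ii): a bare simplex contains every $e_j$, but the tightened polytope $\Omega$ need not, so $g_j^\top\mathbf{d}^\star \geq \|\mathbf{d}^\star\|_2^2$ is only guaranteed for those $j$ whose vertex remains feasible. I would resolve this by writing out the full KKT system of the QP---a multiplier $\lambda$ for the equality, multipliers $\mu_i \geq 0$ for nonnegativity, and $\nu_q \geq 0$ for the preference inequalities, together with the complementary-slackness relations---and arguing on the active face that $\mathbf{d}^\star$ still ascends the preference-admissible objectives, so that the guaranteed improvement is read relative to the human-prior weighting encoded by $\{\mathbf{h}_q,\beta_q\}$ rather than the bare coordinate objectives. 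Convexity of $f$ and compactness of $\Omega$ keep existence and the variational inequality intact throughout; the only delicate point is translating ``ascent on $\Omega$'' back into the per-objective statement in the presence of these extra half-space constraints.
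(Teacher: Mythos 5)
Your treatment of condition (i) coincides with the paper's: $\mathbf{d}^\star=0$ forces a vanishing nonnegative convex combination of the objective gradients, hence Pareto stationarity. For condition (ii) you take a genuinely different route: the paper forms the Lagrangian with multipliers $\lambda_q\geq 0$ for the preference constraints and $\rho$ for the normalization, and reads the inequality $(\sum_i\omega_i\nabla_\theta\mathcal{F}_i(\theta))^\top\nabla_\theta\mathcal{F}_m(\theta)=\sum_q\lambda_q\mathbf{h}_{q,m}+\rho\geq 0$ directly off stationarity, whereas you use the variational inequality for the constrained minimizer plus vertex substitution, which is the classical min-norm-point (MGDA) argument. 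On the bare simplex your argument is complete and correct. The genuine gap is exactly the one you flag yourself: once some constraint $\mathbf{h}_q^\top\bm{\omega}\geq\beta_q$ excludes the vertices $e_j$ from $\Omega$, your proof stops at a sketch. Moreover, that gap cannot be closed in the way you propose, because the per-objective ascent claim is simply false when a preference constraint is active at $\bm{\omega}^\star$.

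Concretely, take $M=2$, $\nabla_\theta\mathcal{F}_1=(1,0)$, $\nabla_\theta\mathcal{F}_2=(-3,0)$, and the single constraint $\omega_2\geq 0.3$. The unconstrained min-norm point $\bm{\omega}=(0.75,0.25)$ gives $\mathbf{d}=0$ (so $\theta$ is Pareto stationary), but it is infeasible; the constrained minimizer is $\bm{\omega}^\star=(0.7,0.3)$, giving $\mathbf{d}^\star=(-0.2,0)\neq 0$ and $\nabla_\theta\mathcal{F}_1(\theta)^\top\mathbf{d}^\star=-0.2<0$. The QP value is positive yet no common ascent direction exists, so both alternatives of the dichotomy fail simultaneously. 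The paper's own proof buries this in the unjustified step ``$\rho\geq 0$'': $\rho$ multiplies the equality constraint $\mathbf{1}^\top\bm{\omega}=1$, so its sign is unrestricted, and complementary slackness in fact yields $2\|\mathbf{d}^\star\|_2^2=\rho+\sum_q\lambda_q\beta_q$, so $\rho<0$ whenever the constraint terms dominate (in the example above, $\rho=-0.4$, $\lambda_1=1.6$). What does survive---and what your closing remark about reading the improvement ``relative to the human-prior weighting'' correctly anticipates---is only the weaker consequence of your variational inequality: $\sum_j\omega_j\,\nabla_\theta\mathcal{F}_j(\theta)^\top\mathbf{d}^\star\geq\|\mathbf{d}^\star\|_2^2>0$ for every feasible $\bm{\omega}\in\Omega$, i.e., $\mathbf{d}^\star$ ascends every preference-feasible scalarization, not every individual objective. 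So your proof is incomplete at precisely the point where the paper's proof is unsound; condition (ii) as stated only holds when no preference constraint is active at $\bm{\omega}^\star$ (e.g., when $\Omega$ is the whole simplex), and in that regime your vertex argument already finishes the job.
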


\begin{proof}
For condition \romannumeral1., if the minimal solution is 0, it indicates that $\sum_{i=0}^M \omega_i \nabla_\theta \mathcal{F}_i(\theta)=0$, which proves that $\theta$ has reached a Pareto stationary point where no further improvement in all objectives is possible. 

For condition \romannumeral2., we reformulate the quadratic programming in Eq.~(\ref{eq:weights}) into the following Lagrangian~\cite{zheng2018dags,tang2023fairness} problem:
\begin{equation*}
    || \sum_{i=1}^M \omega_i \nabla_{\theta} \mathcal{F}_i(\theta) ||_2^2 + \sum_{q=1}^Q \lambda_q (\beta_q - \mathbf{h}_q^\top \bm{\omega}) + \rho (1-\sum_{i=1}^M \omega_m),
\end{equation*}
where $\lambda_q \geq 0$ and $\rho\geq 0$. By the KKT conditions associated with this Lagrangian equation, the following inequality holds:
\begin{equation*}
    (\sum_{i=1}^M \omega_i \nabla_{\theta} \mathcal{F}_i(\theta))^\top \nabla_{\theta} \mathcal{F}_m(\theta) = \sum_{q=1}^Q \lambda_q \mathbf{h}_{q,m} + \rho \geq 0,\quad \forall m \in [1,M].
\end{equation*}
This signifies that for any objective function $\mathcal{F}_m$, there exists a gradient 
$\mathbf{g}=\sum_{i=1}^M \omega_i \nabla_{\theta} \mathcal{F}_i(\theta)$ such that $\mathbf{g}^\top \nabla_\theta \mathcal{F}_m(\theta)\geq 0$, indicating that the solution finds a direction that is able to increase all objectives.
\end{proof}

\subsection{Overall Objective}
By combining the above designs, the overall training loss function of our method is formulated as follows:
\begin{equation}
    \label{eq:final_loss_fun}
    \mathcal{L}_\theta = - \sum_{i=1}^M \omega_i \mathbb{E}_{\pi_{\theta_{b}}} \left[ 
        \min\left( \frac{\pi_\theta(\hat{\mathbf{x}}_{u,v}|u,v)}{\pi_{\theta_{b}} (\hat{\mathbf{x}}_{u,v}|u,v)} A_i(\hat{\mathbf{x}}_{u,v}), \text{clip}\left(\frac{\pi_\theta(\hat{\mathbf{x}}_{u,v}|u,v)}{\pi_{\theta_{b}}(\hat{\mathbf{x}}_{u,v}|u,v)} ,1-\epsilon, 1+\epsilon\right)
        A_i({\hat{\mathbf{x}}_{u,v}})\right)\right],
\end{equation}
where $A_i(\hat{\mathbf{x}}_{u,v})$ represents the advantage value of the generated explanation $\hat{\mathbf{x}}_{u,v}$ in terms of the $i$-th quality perspective, $w_i$ denotes the weight of the i-th objective loss, and M represents the number of objective functions (e.g., informativeness and persuasiveness). By minimizing the above training objective function, the explainable recommender can optimize personalized recommendation explanations from multiple perspectives, better aligning with human preferences. The complete training process of our framework is shown in Algorithm~\ref{alg}. To begin with, we pretrain the explainable recommendation model using observed data, which is used to initialize the parameters $\theta_b$ of the behavior policy network. Then, in the first phase of our optimization framework, HF4Rec interacts with LLM to generate rewards for predicted explanations, which are then converted into advantage values and stored in the replay buffer. In the second phase, we utilize all the data from the replay buffer to update the target policy network parameters. This procedure is repeated through these two phases until the specified number of iterations is reached. And at the end of each iteration cycle, the parameters $\theta$ of the target policy network overwrite the behavior policy $\theta_b$, reducing the estimation bias caused by distribution shifts. In summary, our method implements interactive optimization framework in an off-policy manner.

\begin{algorithm}[t]
\caption{Learning Algorithm of HF4Rec}
\label{alg}
\KwIn{Observed interaction $\mathcal{D}$; Learning rate $\gamma$; Iteration number $T$; Epoch number $E$; Exploration number of each user-item pair $J$; The retrieved item number in user history $K$; The prior preference vectors $\{\mathbf{h}_1,\mathbf{h}_2,\dots,\mathbf{h}_Q\}$ and scalars $\{\beta_1,\beta_2,\dots,\beta_Q\}$; Human-written prompt prototypes;}
\KwOut{Model Parameters $\theta$}

Initialize the replay buffer $\mathcal{B}$ as $\emptyset$;

Initialize the behavior policy network $\mathcal{M}_{\theta_b}$;

Pretrain $\mathcal{M}_{\theta_b}$ on observed dataset $\mathcal{D}$;

\For{$i \gets 1$ \textbf{to} $T$}{
    \tcp{\textbf{Phase \uppercase\expandafter{\romannumeral1}: Data Collection}}
    Sample unobserved data $\tilde{\mathcal{D}}$ based on Eq.~(\ref{eq:prob_sampling});
    
    \For{$(u,v)$ \textbf{in} $\mathcal{D}\cup\tilde{\mathcal{D}}$}{
        Format item information using Prompt~\ref{prompt:user_item_info};
        
        Retrieve the top-$K$ similar history items based on the information of target item $v$;

        Format user information using Prompt~\ref{prompt:user_item_info};

        Generate Human-Induced Few-shot context using Prompt~\ref{prompt:context_prompt};
        
        \For{$j \gets 1$ \textbf{to} $J$}{
            \tcp{Probabilistic sampling}
            $\hat{\mathbf{x}}_{u,v}^j, \pi_{\theta_b}(\hat{\mathbf{x}}_{u,v}^j|u,v) \gets \mathcal{M}_{\theta_b}(u,v)$;

            Generate multi-perspective reward values $\mathbf{A}(\hat{\mathbf{x}}_{u,v}^j)$ using Prompt~\ref{prompt:reward_prompt};
            
            Store $(u,v,\hat{\mathbf{x}}_{u,v}^j,\mathbf{A}(\hat{\mathbf{x}}_{u,v}^j),\pi_{\theta_b}(\hat{\mathbf{x}}_{u,v}^j|u,v))$ into $\mathcal{B}$;
        }
    }
    Replace all reward values in $\mathcal{B}$ with advantage values using Eq.~(\ref{eq:adv_value});

    \tcp{\textbf{Phase \uppercase\expandafter{\romannumeral2}: Model Updating}}

    Initialize the target policy network $\mathcal{M}_{\theta}$ with $\theta_b$;

    \For{$e \gets 1$ \textbf{to} $E$}{
        Calculate the weights $\bm{\omega}$ using Eq.~(\ref{eq:weights});

        Update $\theta$ using  Eq.~(\ref{eq:final_loss_fun}) on dataset $\mathcal{B}$;
    }

    Replace behavior policy parameters $\theta_b$ with $\theta$; 
}

\KwRet{$\theta$};
\end{algorithm}

\section{EXPERIMENTS}
In this section, we conduct extensive experiments to demonstrate the effectiveness of out optimization framework, where we focus on the following research questions:
\begin{itemize}
    \item \textbf{RQ1:} How does our proposed framework perform across different explainable recommendation backbones in terms of recommendation accuracy, objective text metrics, and subjective human evaluations?
    \item \textbf{RQ2:} How do the different components within our framework contribute to the final performance?
    \item \textbf{RQ3:} Whether large language models, as human simulators, can effectively substitute for human participation in our model optimization and evaluation?
    \item \textbf{RQ4:} How do key hyper-parameters within our framework (\emph{e.g.}, the training data volume, the exploration number of generated explanations) impact the overall model performance?
    \item \textbf{RQ5:} Whether the Pareto optimization in HF4Rec can enhance the multi-perspective quality of explanations and achieve the desired Pareto front results?
    \item \textbf{RQ6:} Whether HF4Rec can improve the model robustness in the face of data sparsity?
\end{itemize}

\subsection{Experimental Setup}
\subsubsection{Datasets}
To validate the effectiveness of our proposed method, we conduct extensive experiments on user interaction behaviors across different types of product domains, as testing on diverse datasets is more beneficial for verifying the generalizability of our proposed algorithm. Specifically, the dataset details are as follows:
\begin{itemize}
    \item \textit{\textbf{Amazon Beauty, Sports, and VideoGames}}\footnote{\url{https://cseweb.ucsd.edu/~jmcauley/datasets/amazon/links.html}}: These datasets consists of product reviews and metadata from the Amazon website. We select three specific subcategories: \textit{Beauty, Sports and Outdoors, and Video Games}.
    \item \textit{\textbf{Yelp}}\footnote{\url{https://www.yelp.com/dataset}}: This widely-used dataset for business recommendations is collected from the Yelp website and contains numerous user comments. In this dataset, business venues that user visited are treated as items.
\end{itemize}

For the Amazon datasets, we filter out items and users with fewer than five interactions (\emph{i.e.}, 5-core), and utilize the description, category, and title as item features. For the Yelp dataset, due to its larger scale, we adopt a 15-core filtering approach, and use the state abbreviation, city, name, and category as item features. We leave the last user interaction as the test sub-dataset, while the remaining interactions serving as the training sub-dataset. The detailed statistics of the above datasets are shown in Table~\ref{tab:datasets}. These datasets vary in different domains, scale, and density, helping to demonstrate the generality of the proposed framework.

\subsubsection{Baselines} 
Given that the explainable recommendation task includes both explanation generation and rating prediction, we will introduce the baselines used for two tasks, respectively. 

\begin{table}[t]
    \centering
    \caption{Statistics of the datasets.}
    \label{tab:datasets}
    \begin{tabular}{c*{4}{|c}} 
        \hline\hline
         & Beauty & Sports & VideoGames & Yelp\\
         \hline
         \# User & 5,396 & 9,340 & 13,957 & 15,025 \\
         \# Item & 3,178 & 5,904 & 7,378 & 12,445 \\
         \# Inter. & 54,805 & 73,864 & 140,353 & 698,084 \\
         \# Feature & 944 & 2,049 & 2,446 & 5,619 \\
         Sparsity & 99.68\% & 99.87\% & 99.86\% & 99.63\% \\
         Avg. Length & 11.71 & 12.68 & 13.13 & 12.05 \\
        \hline\hline
    \end{tabular}
\end{table}

\begin{enumerate}[leftmargin=*]
    \item \textbf{Explanation Generation Task.} To evaluate the quality of generated explanations, we compare our framework with the following representative baselines:
    \begin{itemize}
        \item \textbf{PETER}~\cite{li-etal-2021-personalized} utilizes a Transformer language model to predict target sentences based on given user and item IDs, infusing IDs with linguistic meaning to generate personalized explanations.
        \item \textbf{PEPLER}~\cite{li2023personalized} incorporates prompt learning strategy to bridge the gap between continuous prompts and pretrained models, enhancing the performance of explanation generation.
        \item \textbf{ERRA}~\cite{DBLP:conf/acl/ChengWLZ0LL23} enhances recommendation explanations by integrating retrieval-augmented information from the training set and aspect-enhancement components into the model, producing more accurate explanations.
    \end{itemize}

Based on the three backbones, we propose two methods:  \textbf{(1) Holistic-based Method}: We modify the multi-perspective reward based on informativeness and persuasiveness in Prompt~\ref{prompt:reward_prompt} to generate a reward focused on overall quality of generated recommendation explanation. The corresponding backbones are denoted as \textbf{PETER-H}, \textbf{PEPLER-H}, and \textbf{ERRA-H}. \textbf{(2) Multi-perspective-based Method}: We use the multi-perspective reward based on informativeness and persuasiveness (\emph{cf.} Prompt~\ref{prompt:reward_prompt}), with the objective function Eq.~\eqref{eq:final_loss_fun} in Sec.~\ref{sec:overall_obj} as the model optimization goal. The corresponding backbones are denoted as \textbf{PETER-M}, \textbf{PEPLER-M}, and \textbf{ERRA-M}.

    \item \textbf{Rating Prediction Task.} 
    For the rating prediction task, we compare our framework against the following two simple yet effective baseline models:
    \begin{itemize}
        \item \textbf{PMF}~\cite{mnih2007probabilistic} is a well-known matrix factorization method that represents users and items as latent vectors, and the rating predictions are computed using the dot product of these vectors.
        \item \textbf{SVD++}~\cite{koren2010factor} enhances the traditional Singular Value Decomposition (SVD)~\cite{zhou2015svd} approach by incorporating implicit feedback to better capture and predict user preferences.
        \item \textbf{NRT}~\cite{li2017neural} proposes a deep learning-based framework that improves rating prediction quality by simultaneously optimizing user tips and rating predictions.
        \item \textbf{NETE}~\cite{li2020generate} proposes a learning framework for neural template explanation generation and rating prediction, enhancing recommendation performance by incorporating rich contextual information.
    \end{itemize}
\end{enumerate}

\subsubsection{Evaluation Metrics}
To comprehensively evaluate the performance of explanation generation, we assess the  experimental results from both objective and subjective aspects:
\begin{itemize}[leftmargin=*]
    \item \textbf{Objective Evaluation}. Following the previous work~\cite{li-etal-2021-personalized,li2023personalized,DBLP:conf/acl/ChengWLZ0LL23,geng2022recommendation,li2023prompt,li2023ucepic}, we employ commonly adopted text similarity metrics, including BERT-Score~\cite{bert-score}, BLEU~\cite{Papineni_Roukos_Ward_Zhu_2001}, and ROUGE~\cite{lin2004rouge}. Specifically, we report precision, recall, and F1 scores for ROUGE-1, ROUGE-2, and ROUGE-L, which correspond to the smoothed ROUGE scores for 1-grams, 2-grams, and longest common subsequence, respectively. Moreover, in the field of explainable recommendation, the textual similarity does not necessarily equal to explainability. For example, humans may focus more on whether the explanation includes specific item features. Hence, we introduce the Feature Matching Rate (FMR)~\cite{li-etal-2021-personalized} metric, which represents the coverage rate of features between prediction and oracle texts.
    \item \textbf{Subjective Evaluation}. To further explore whether generated explanations can indeed assist users, we design questionnaires to ask the feelings of the users on explanation effectiveness. Specifically, we focus on two perspectives of the recommendation explanations: (1) \textit{Informativeness: whether the generated explanation help users learn more about the product's characteristic or usage experience?} (2) \textit{Persuasiveness: whether the generated explanation can aid users in making fast decisions?}
    For each aspect, participants are asked to assess on a scale of 1 to 3, where higher scores indicate better performance. To save human resource costs, in this paper, we employ the ChatGPT-3.5 Turbo\footnote{\url{https://platform.openai.com/docs/models/overview}} as human proxies to evaluate 200 samples randomly collected from each dataset. And the empirical analysis and discussion with respect to the rationality of substituting LLM for real-human in subjective evaluation experiment are explored in Section~\ref{sec:consistency}.
    \item \textbf{Recommendation Performance Evaluation}. For measuring the accuracy of the rating prediction task, we use Root Mean Square Error (RMSE) and Mean Absolute Error (MAE) as the evaluation metrics, where lower values signify better recommendation performance.
\end{itemize}

\subsubsection{Implementation Details}
For all the compared models, we use the open-source code provided by the respective authors and reuse the default settings specified in their original publications. For our methods, we implement them based on the PyTorch~\cite{paszke2019pytorch} platform. To mitigate the problem of Out-Of-Vocabulary (OOV) words, we employ the Byte Pair Encoding (BPE) tokenizer from GPT-2 to construct the vocabulary, and we limit the maximum length of the generated explanations at 15 BPE tokens. 

To ensure fairness, we fix the batch size at 128 and align the embedding size with the optimal parameters of the original backbone models. We optimize all models using the Adam~\cite{kingma2014adam} optimizer, and initialize model parameters with the uniform distribution. We search the learning rate $\gamma$ from $\{1e\text{-}5,5e\text{-}5,1e\text{-}4,5e\text{-}4,1e\text{-}3\}$, and tune clipping hyper-parameter $\epsilon$ in the range of $\{0.1,0.2,0.3\}$. We apply the $K$-means method to cluster users into five groups, and for each representative user in the groups, we randomly select an interacted item and manually write three scoring examples as prompt prototypes. We set each prior preference vector and scalar value in Pareto optimization to one-hot vector and 0.2, respectively. For the number of sampled unobserved data and the number of customized examples in data collection phase, we set them at 6400 and 3 respectively.

\subsection{Overall Performance (RQ1)}

\begin{table}[htbp]
    \centering
    \caption{Performance comparison of explanation quality in terms of Objective Evaluation. BS-P, BS-R, BS-F, R1-P, R1-R, R1-F, R2-P, R2-R, R2-F, RL-P, RL-R, RL-F denote Precision, Recall and F1 of BERTScore, ROUGE-1, ROUGE-2 and ROUGE-L, respectively. To facilitate performance comparison, all metrics in the table are shown in percentage values (\% symbol is omitted for clearer presentation). The best results are highlighted in \textbf{bold}. For better visual effect, we distinguished the best-performing results of the \colorbox{SkyBlue}{holistic-based methods} and \colorbox{Peach}{multi-perspective-based methods} using different colors. Our framework’s improvements exhibit statistical significant under a paired t-test with a significance level of $p$ < 0.05.}
    \label{tab:obj}
    \renewcommand{\arraystretch}{1.2}
    \resizebox{\textwidth}{!}{
    \begin{tabular}{*{15}{c}} 
        \hline\hline
        \multicolumn{15}{c}{Beauty}\\
         & FMR & BS-P & BS-R & BS-F & BLEU & R1-P & R1-R & R1-F & R2-P & R2-R & R2-F & RL-P & RL-R & RL-F\\ 
        \toprule 
        PETER    & 20.682          & 84.498          & 84.889          & 84.676          & 11.770          & 13.679          & { 14.760}    & 13.306          & 2.354          & { 2.588}    & { 2.273}    & 12.153          & { 13.232}    & 11.843          \\
PETER-H  & \colorbox{SkyBlue}{\textbf{23.239}} & { 84.962}    & \colorbox{SkyBlue}{\textbf{85.163}} & { 85.045}    & \colorbox{SkyBlue}{\textbf{12.672}} & \colorbox{SkyBlue}{\textbf{14.815}} & \colorbox{SkyBlue}{\textbf{16.042}} & \colorbox{SkyBlue}{\textbf{14.481}} & \colorbox{SkyBlue}{\textbf{2.647}} & \colorbox{SkyBlue}{\textbf{2.961}} & \colorbox{SkyBlue}{\textbf{2.583}} & \colorbox{SkyBlue}{\textbf{13.098}} & \colorbox{SkyBlue}{\textbf{14.345}} & \colorbox{SkyBlue}{\textbf{12.845}} \\
PETER-M  & { 22.646}    & \colorbox{Peach}{\textbf{85.251}} & { 85.113}    & \colorbox{Peach}{\textbf{85.165}} & { 12.059}    & { 14.438}    & 14.480          & { 13.473}    & { 2.406}    & 2.357          & 2.176          & { 12.979}    & 13.055          & { 12.098}    \\
\hline
PEPLER   & 20.960          & 84.122          & 85.152          & 84.620          & 10.659          & 11.912          & { 15.987}    & 12.987          & { 1.504}    & 2.170          & { 1.665}    & 10.433          & { 14.227}    & 11.426          \\
PEPLER-H & { 21.349}    & { 84.788}    & { 85.435}    & { 85.097}    & \colorbox{SkyBlue}{\textbf{11.399}} & { 12.791}    & \colorbox{SkyBlue}{\textbf{18.674}} & \colorbox{SkyBlue}{\textbf{14.467}} & 1.380          & { 2.204}    & 1.600          & { 10.734}    & \colorbox{SkyBlue}{\textbf{15.992}} & \colorbox{SkyBlue}{\textbf{12.210}} \\
PEPLER-M & \colorbox{Peach}{\textbf{24.889}} & \colorbox{Peach}{\textbf{85.324}} & \colorbox{Peach}{\textbf{85.630}} & \colorbox{Peach}{\textbf{85.460}} & { 11.373}    & \colorbox{Peach}{\textbf{13.850}} & 14.935          & { 13.481}    & \colorbox{Peach}{\textbf{2.016}} & \colorbox{Peach}{\textbf{2.263}} & \colorbox{Peach}{\textbf{1.992}} & \colorbox{Peach}{\textbf{12.459}} & 13.515          & { 12.147}    \\
\hline
ERRA     & 23.443          & { 85.909}    & 85.268          & 85.568          & 12.726          & 15.960          & 15.068          & 14.361          & 2.825          & 2.720          & 2.528          & 14.364          & 13.539          & 12.886          \\
ERRA-H   & { 25.556}    & 86.535          & { 85.523}    & { 86.007}    & { 13.299}    & { 17.276}    & { 16.055}    & { 15.431}    & \colorbox{SkyBlue}{\textbf{3.127}} & { 2.924}    & { 2.755}    & \colorbox{SkyBlue}{\textbf{15.545}} & { 14.465}    & { 13.863}    \\
ERRA-M   & \colorbox{Peach}{\textbf{30.393}} & \colorbox{Peach}{\textbf{86.581}} & \colorbox{Peach}{\textbf{85.897}} & \colorbox{Peach}{\textbf{86.222}} & \colorbox{Peach}{\textbf{15.170}} & \colorbox{Peach}{\textbf{17.444}} & \colorbox{Peach}{\textbf{19.181}} & \colorbox{Peach}{\textbf{17.131}} & { 2.978}    & \colorbox{Peach}{\textbf{3.361}} & \colorbox{Peach}{\textbf{2.935}} & { 15.281}    & \colorbox{Peach}{\textbf{16.996}} & \colorbox{Peach}{\textbf{15.053}}\\
        \hline\hline
        \multicolumn{15}{c}{Sports}\\
         & FMR & BS-P & BS-R & BS-F & BLEU & R1-P & R1-R & R1-F & R2-P & R2-R & R2-F & RL-P & RL-R & RL-F\\ 
         \hline
         PETER    & 8.619           & 83.878          & 84.109          & 83.975          & 9.725           & 11.377          & 11.304          & 10.603          & 1.283          & 1.341          & 1.206          & 10.205          & 10.174          & 9.500           \\
PETER-H  & { 9.829}     & \colorbox{SkyBlue}{\textbf{84.444}} & { 84.208}    & { 84.306}    & { 9.853}     & { 12.584}    & { 11.860}    & { 11.291}    & \colorbox{SkyBlue}{\textbf{1.576}} & { 1.577}    & \colorbox{SkyBlue}{\textbf{1.433}} & \colorbox{SkyBlue}{\textbf{11.322}} & { 10.696}    & { 10.138}    \\
PETER-M  & \colorbox{Peach}{\textbf{10.150}} & { 84.260}    & \colorbox{Peach}{\textbf{84.405}} & \colorbox{Peach}{\textbf{84.317}} & \colorbox{Peach}{\textbf{11.178}} & \colorbox{Peach}{\textbf{12.670}} & \colorbox{Peach}{\textbf{13.744}} & \colorbox{Peach}{\textbf{12.392}} & { 1.359}    & \colorbox{Peach}{\textbf{1.670}} & { 1.393}    & { 11.181}    & \colorbox{Peach}{\textbf{12.272}} & \colorbox{Peach}{\textbf{10.973}} \\
\hline
PEPLER   & 7.998           & 83.636          & 84.518          & 84.061          & 9.347           & 10.357          & 12.932          & 10.991          & 0.916          & 1.283          & 1.005          & 8.987           & 11.375          & 9.573           \\
PEPLER-H & { 9.176}     & { 84.399}    & { 84.857}    & { 84.615}    & { 10.661}    & { 11.644}    & \colorbox{SkyBlue}{\textbf{16.781}} & { 13.146}    & { 1.011}    & \colorbox{SkyBlue}{\textbf{1.715}} & { 1.197}    & { 9.880}     & \colorbox{SkyBlue}{\textbf{14.530}} & { 11.222}    \\
PEPLER-M & \colorbox{Peach}{\textbf{11.017}} & \colorbox{Peach}{\textbf{86.050}} & \colorbox{Peach}{\textbf{85.249}} & \colorbox{Peach}{\textbf{85.634}} & \colorbox{Peach}{\textbf{11.783}} & \colorbox{Peach}{\textbf{14.226}} & { 14.397}    & \colorbox{Peach}{\textbf{13.449}} & \colorbox{Peach}{\textbf{1.373}} & { 1.440}    & \colorbox{Peach}{\textbf{1.309}} & \colorbox{Peach}{\textbf{12.408}} & { 12.649}    & \colorbox{Peach}{\textbf{11.750}} \\
\hline
ERRA     & 9.711           & 85.229          & 84.533          & 84.862          & 10.334          & 13.454          & 12.133          & 11.875          & { 1.749}    & 1.583          & { 1.518}    & 12.176          & 11.004          & 10.734          \\
ERRA-H   & { 10.557}    & \colorbox{SkyBlue}{\textbf{85.566}} & \colorbox{SkyBlue}{\textbf{84.648}} & \colorbox{SkyBlue}{\textbf{85.088}} & { 10.691}    & { 14.211}    & { 12.807}    & { 12.523}    & \colorbox{SkyBlue}{\textbf{1.857}} & \colorbox{SkyBlue}{\textbf{1.804}} & \colorbox{SkyBlue}{\textbf{1.649}} & \colorbox{SkyBlue}{\textbf{12.686}} & { 11.495}    & { 11.180}    \\
ERRA-M   & \colorbox{Peach}{\textbf{11.467}} & { 85.439}    & { 84.557}    & { 84.979}    & \colorbox{Peach}{\textbf{10.737}} & \colorbox{Peach}{\textbf{14.222}} & \colorbox{Peach}{\textbf{12.917}} & \colorbox{Peach}{\textbf{12.567}} & 1.594          & { 1.688}    & 1.483          & { 12.677}    & \colorbox{Peach}{\textbf{11.568}} & \colorbox{Peach}{\textbf{11.201}}\\
        \hline\hline 
        \multicolumn{15}{c}{VideoGames}\\
         & FMR & BS-P & BS-R & BS-F & BLEU & R1-P & R1-R & R1-F & R2-P & R2-R & R2-F & RL-P & RL-R & RL-F\\ 
        \toprule 
        PETER    & 26.918          & 84.349          & 84.827          & 84.570          & 11.370          & 13.062          & 14.609          & 13.010          & 2.012          & 2.530          & 2.072          & 11.523          & 13.031          & 11.513          \\
PETER-H  & { 29.777}    & { 85.585}    & { 85.210}    & { 85.378}    & { 12.618}    & { 15.634}    & { 15.564}    & { 14.469}    & { 2.682}    & { 2.926}    & { 2.532}    & { 13.915}    & { 13.920}    & { 12.882}    \\
PETER-M  & \colorbox{Peach}{\textbf{33.001}} & \colorbox{Peach}{\textbf{86.439}} & \colorbox{Peach}{\textbf{85.434}} & \colorbox{Peach}{\textbf{85.914}} & \colorbox{Peach}{\textbf{12.659}} & \colorbox{Peach}{\textbf{17.371}} & \colorbox{Peach}{\textbf{15.934}} & \colorbox{Peach}{\textbf{15.284}} & \colorbox{Peach}{\textbf{3.112}} & \colorbox{Peach}{\textbf{3.119}} & \colorbox{Peach}{\textbf{2.793}} & \colorbox{Peach}{\textbf{15.509}} & \colorbox{Peach}{\textbf{14.261}} & \colorbox{Peach}{\textbf{13.637}} \\
\hline
PEPLER   & 26.990          & 84.354          & 85.259          & 84.789          & 11.271          & 12.801          & 16.583          & 13.765          & 1.748          & 2.586          & 1.948          & 10.964          & 14.401          & 11.841          \\
PEPLER-H & \colorbox{SkyBlue}{\textbf{30.257}} & \colorbox{SkyBlue}{\textbf{84.849}} & \colorbox{SkyBlue}{\textbf{85.576}} & \colorbox{SkyBlue}{\textbf{85.196}} & \colorbox{SkyBlue}{\textbf{12.405}} & \colorbox{SkyBlue}{\textbf{13.851}} & \colorbox{SkyBlue}{\textbf{18.745}} & \colorbox{SkyBlue}{\textbf{15.186}} & \colorbox{SkyBlue}{\textbf{2.018}} & \colorbox{SkyBlue}{\textbf{3.190}} & \colorbox{SkyBlue}{\textbf{2.313}} & \colorbox{SkyBlue}{\textbf{11.838}} & \colorbox{SkyBlue}{\textbf{16.249}} & \colorbox{SkyBlue}{\textbf{13.036}} \\
PEPLER-M & { 28.817}    & { 84.726}    & { 85.558}    & { 85.125}    & { 11.801}    & { 13.395}    & { 18.217}    & { 14.725}    & { 1.932}    & { 3.038}    & { 2.207}    & { 11.444}    & { 15.803}    & { 12.640}    \\
\hline
ERRA     & 33.188          & 84.872          & 85.395          & 85.116          & 12.701          & 14.446          & 17.562          & 15.041          & 2.411          & 3.242          & 2.578          & 12.556          & 15.446          & 13.120          \\
ERRA-H   & { 35.767}    & \colorbox{SkyBlue}{\textbf{85.743}} & { 85.620}    & { 85.665}    & { 13.939}    & { 15.793}    & { 18.723}    & { 16.146}    & { 2.717}    & \colorbox{SkyBlue}{\textbf{3.646}} & { 2.869}    & { 13.757}    & { 16.495}    & { 14.107}    \\
ERRA-M   & \colorbox{Peach}{\textbf{35.867}} & { 85.706}    & \colorbox{Peach}{\textbf{85.660}} & \colorbox{Peach}{\textbf{85.667}} & \colorbox{Peach}{\textbf{13.982}} & \colorbox{Peach}{\textbf{15.905}} & \colorbox{Peach}{\textbf{18.839}} & \colorbox{Peach}{\textbf{16.291}} & \colorbox{Peach}{\textbf{2.770}} & { 3.637}    & \colorbox{Peach}{\textbf{2.908}} & \colorbox{Peach}{\textbf{13.800}} & \colorbox{Peach}{\textbf{16.519}} & \colorbox{Peach}{\textbf{14.179}}\\
\hline\hline
        \multicolumn{15}{c}{Yelp}\\
         & FMR & BS-P & BS-R & BS-F & BLEU & R1-P & R1-R & R1-F & R2-P & R2-R & R2-F & RL-P & RL-R & RL-F\\ 
        \toprule 
        PETER    & 9.151           & 83.459          & 84.252          & 83.836          & 11.849          & 13.039          & 15.619          & 13.614          & 1.344          & 1.720          & 1.435          & 11.682          & 14.120          & 12.236          \\
PETER-H  & { 10.323}    & { 83.636}    & { 84.300}    & { 83.949}    & { 12.015}    & { 13.249}    & { 16.089}    & { 13.895}    & { 1.429}    & { 1.922}    & { 1.549}    & { 11.936}    & { 14.594}    & { 12.552}    \\
PETER-M  & \colorbox{Peach}{\textbf{12.632}} & \colorbox{Peach}{\textbf{84.340}} & \colorbox{Peach}{\textbf{84.562}} & \colorbox{Peach}{\textbf{84.435}} & \colorbox{Peach}{\textbf{13.009}} & \colorbox{Peach}{\textbf{13.872}} & \colorbox{Peach}{\textbf{17.824}} & \colorbox{Peach}{\textbf{14.889}} & \colorbox{Peach}{\textbf{1.708}} & \colorbox{Peach}{\textbf{2.439}} & \colorbox{Peach}{\textbf{1.891}} & \colorbox{Peach}{\textbf{12.666}} & \colorbox{Peach}{\textbf{16.393}} & \colorbox{Peach}{\textbf{13.636}} \\
\hline
PEPLER   & 7.953           & 83.637          & 84.543          & 84.072          & 10.715          & 11.835          & 15.161          & 12.745          & 1.140          & 1.600          & 1.264          & 10.501          & 13.565          & 11.340          \\
PEPLER-H & { 10.110}    & { 84.159}    & \colorbox{SkyBlue}{\textbf{84.806}} & { 84.465}    & \colorbox{SkyBlue}{\textbf{12.061}} & \colorbox{SkyBlue}{\textbf{13.107}} & \colorbox{SkyBlue}{\textbf{17.391}} & \colorbox{SkyBlue}{\textbf{14.322}} & \colorbox{SkyBlue}{\textbf{1.440}} & { 2.087}    & { 1.620}    & \colorbox{SkyBlue}{\textbf{11.515}} & { 15.432}    & \colorbox{SkyBlue}{\textbf{12.625}} \\
PEPLER-M & \colorbox{Peach}{\textbf{11.634}} & \colorbox{Peach}{\textbf{84.397}} & { 84.643}    & \colorbox{Peach}{\textbf{84.504}} & { 11.872}    & { 12.650}    & { 17.622}    & { 14.099}    & { 1.438}    & \colorbox{Peach}{\textbf{2.284}} & \colorbox{Peach}{\textbf{1.667}} & { 11.169}    & \colorbox{Peach}{\textbf{15.683}} & { 12.481}    \\
\hline
ERRA     & 8.825           & 84.338          & 84.443          & 84.375          & 12.249          & 13.491          & 16.871          & 14.363          & 1.445          & 1.941          & 1.573          & 11.979          & 15.159          & 12.812          \\
ERRA-H   & { 11.740}    & { 84.860}    & \colorbox{SkyBlue}{\textbf{84.606}} & { 84.718}    & { 12.725}    & { 13.552}    & \colorbox{SkyBlue}{\textbf{18.597}} & { 15.005}    & { 1.705}    & { 2.596}    & { 1.950}    & { 12.437}    & \colorbox{SkyBlue}{\textbf{17.209}} & { 13.816}    \\
ERRA-M   & \colorbox{Peach}{\textbf{12.093}} & \colorbox{Peach}{\textbf{84.987}} & { 84.522}    & \colorbox{Peach}{\textbf{84.739}} & \colorbox{Peach}{\textbf{13.094}} & \colorbox{Peach}{\textbf{13.877}} & { 18.472}    & \colorbox{Peach}{\textbf{15.141}} & \colorbox{Peach}{\textbf{1.873}} & \colorbox{Peach}{\textbf{2.754}} & \colorbox{Peach}{\textbf{2.103}} & \colorbox{Peach}{\textbf{12.790}} & { 17.133}    & \colorbox{Peach}{\textbf{13.992}}\\
        \hline\hline
    \end{tabular}
}
\end{table}

\begin{table}[htbp]
    \centering
    \caption{Performance comparison of explanation quality in terms of Subjective Evaluation. ``Info.'' and ``Persv.'' stand for Informativeness and Persuasiveness respectively. The best results are highlighted in \textbf{bold}.  For better visual effect, we distinguished the best-performing results of the \colorbox{SkyBlue}{holistic-based methods} and \colorbox{Peach}{multi-perspective-based methods} using different colors.}
    \label{tab:sub}
    \renewcommand{\arraystretch}{1.15}
    \resizebox{\textwidth}{!}{
    \begin{tabular}{*{3}{c}|*{2}{c}|*{2}{c}|*{2}{c}} 
    \hline\hline
     & \multicolumn{2}{c|}{Beauty} & \multicolumn{2}{c|}{Sports} & \multicolumn{2}{c|}{VideoGames} & \multicolumn{2}{c}{Yelp}\\
     & Info. & Persv. & Info. & Persv. &Info. & Persv. & Info. & Persv. \\
     \toprule
     PETER    & 1.460±0.5839          & 1.775±0.6687          & 1.510±0.4999          & 1.655±0.5709          & 1.275±0.4465          & 1.485±0.5097          & 1.770±0.4440          & 1.700±0.6164          \\
PETER-H  & { 1.540±0.5731}    & { 1.840±0.6359}    & { 1.610±0.4979}    & { 1.830±0.5302}    & { 1.480±0.4996}    & { 1.780±0.4377}    & { 1.895±0.4049}    & { 1.865±0.6138}    \\
PETER-M  & \colorbox{Peach}{\textbf{1.640±0.5481}} & \colorbox{Peach}{\textbf{2.010±0.5999}} & \colorbox{Peach}{\textbf{1.675±0.4789}} & \colorbox{Peach}{\textbf{1.905±0.6050}} & \colorbox{Peach}{\textbf{1.565±0.5057}} & \colorbox{Peach}{\textbf{1.875±0.3733}} & \colorbox{Peach}{\textbf{1.925±0.2990}} & \colorbox{Peach}{\textbf{2.000±0.4123}} \\
\hline
PEPLER   & 1.630±0.5414          & 1.830±0.6009          & 1.735±0.4948          & 1.790±0.6131          & 1.445±0.4970          & 1.695±0.5118          & 1.795±0.5856          & 1.800±0.6403          \\
PEPLER-H & { 1.885±0.5760}    & { 2.250±0.5895}    & \colorbox{SkyBlue}{\textbf{2.235±0.4688}} & \colorbox{SkyBlue}{\textbf{2.320±0.4976}} & \colorbox{SkyBlue}{\textbf{1.820±0.4331}} & \colorbox{SkyBlue}{\textbf{1.920±0.3790}} & { 2.020±0.3600}    & { 2.040±0.5817}    \\
PEPLER-M & \colorbox{Peach}{\textbf{1.955±0.6580}} & \colorbox{Peach}{\textbf{2.390±0.5458}} & { 2.000±0.3162}    & { 2.160±0.4055}    & { 1.720±0.4490}    & { 1.885±0.4022}    & \colorbox{Peach}{\textbf{2.160±0.4055}} & \colorbox{Peach}{\textbf{2.250±0.5810}} \\
\hline
ERRA     & 1.595±0.5486          & 1.915±0.5981          & 1.610±0.5366          & 1.780±0.5582          & 1.345±0.4754          & 1.685±0.4752          & 1.925±0.4576          & 1.875±0.6159          \\
ERRA-H   & { 1.845±0.4012}    & { 2.055±0.5020}    & { 1.785±0.4677}    & { 2.025±0.4737}    & \colorbox{SkyBlue}{\textbf{1.630±0.4828}} & { 1.875±0.3597}    & { 2.055±0.3633}    & { 2.160±0.5517}    \\
ERRA-M   & \colorbox{Peach}{\textbf{1.870±0.3648}} & \colorbox{Peach}{\textbf{2.195±0.5263}} & \colorbox{Peach}{\textbf{1.825±0.4054}} & \colorbox{Peach}{\textbf{2.115±0.4917}} & {1.595±0.49090}   & \colorbox{Peach}{\textbf{1.880±0.3945}} & \colorbox{Peach}{\textbf{2.065±0.3616}} & \colorbox{Peach}{\textbf{2.210±0.4538}}\\
    \hline\hline
    \end{tabular}
}
\end{table}

\begin{table}[htbp]
    \centering
    \caption{Performance comparison of different methods on four datasets in terms of Recommendation Accuracy. The best results are highlighted in \textbf{bold}.  For better visual effect, we distinguished the best-performing results of the \colorbox{SkyBlue}{holistic-based methods} and \colorbox{Peach}{multi-perspective-based methods} using different colors.}
    \label{tab:acc}
    \renewcommand{\arraystretch}{1.15}
    \begin{tabular}{*{3}{c}|*{2}{c}|*{2}{c}|*{2}{c}} 
    \hline\hline
     & \multicolumn{2}{c|}{Beauty} & \multicolumn{2}{c|}{Sports} & \multicolumn{2}{c|}{VideoGames} & \multicolumn{2}{c}{Yelp}\\
     & RMSE & MAE & RMSE & MAE & RMSE & MAE & RMSE & MAE \\
     \toprule
    PMF      & 1.3545          & 1.1043          & 1.2497          & 1.0200          & 1.4803          & 1.2083          & 1.6563          & 1.4093          \\
SVD++    & 1.1272 & 0.9542 & 0.9827 & 0.8207 & 1.1386 & 0.9576 & 1.1149 & 0.8666 \\
    NRT & 1.1373 & 0.8125 & 0.9893 & 0.6333 & 1.2500 & 0.8909 & 1.2216 & 0.9154 \\
    NETE & 1.2671 & 0.9502 & 1.0960 & 0.7652 & 1.4444 & 1.0987 & 1.3456 & 1.1011 \\
\hline
PETER    & 1.0928          & { 0.7606}    & 0.9791          & 0.6715          & 1.1221          & 0.8041          & \textbf{1.1149} & { 0.8666}    \\
PETER-H  & { 1.0857}    & 0.7684          & \colorbox{SkyBlue}{\textbf{0.9629}} & { 0.6487}    & \colorbox{SkyBlue}{\textbf{1.1059}} & { 0.7991}    & { 1.1178}    & \colorbox{SkyBlue}{\textbf{0.8643}} \\
PETER-M  & \colorbox{Peach}{\textbf{1.0834}} & \colorbox{Peach}{\textbf{0.7544}} & { 0.9641}    & \colorbox{Peach}{\textbf{0.6468}} & { 1.1102}    & \colorbox{Peach}{\textbf{0.7905}} & 1.1199          & 0.8667          \\
\hline
PEPLER   & { 1.0335}    & \textbf{0.7501} & { 0.9135}    & \textbf{0.6728} & 1.0780          & \textbf{0.8096} & { 1.1052}    & \textbf{0.8606} \\
PEPLER-H & 1.0338          & { 0.7507}    & { 0.9135}    & \colorbox{SkyBlue}{\textbf{0.6728}} & \colorbox{SkyBlue}{\textbf{1.0567}} & { 0.8202}    & \colorbox{SkyBlue}{\textbf{1.1045}} & { 0.8658}    \\
PEPLER-M & \colorbox{Peach}{\textbf{1.0211}} & 0.7589          & \colorbox{Peach}{\textbf{0.9055}} & { 0.6811}    & { 1.0641}    & 0.8374          & { 1.1052}    & \colorbox{Peach}{\textbf{0.8606}} \\
\hline
ERRA     & 1.1210          & 0.7830          & 0.9728          & 0.6516          & 1.1324          & 0.7972          & 1.1686          & 0.8823          \\
ERRA-H   & { 1.0948}    & \colorbox{SkyBlue}{\textbf{0.7672}} & { 0.9471}    & { 0.6298}    & { 1.1197}    & \colorbox{SkyBlue}{\textbf{0.7843}} & \colorbox{SkyBlue}{\textbf{1.1615}} & { 0.8738}    \\
ERRA-M   & \colorbox{Peach}{\textbf{1.0920}} & { 0.7678}    & \colorbox{Peach}{\textbf{0.9370}} & \colorbox{Peach}{\textbf{0.6112}} & \colorbox{Peach}{\textbf{1.1108}} & { 0.7859}    & { 1.1618}    & \colorbox{Peach}{\textbf{0.8683}}\\
    \hline\hline
    \end{tabular}
\end{table}

\begin{figure}[t]
    \centering
    \includegraphics[width=\textwidth]{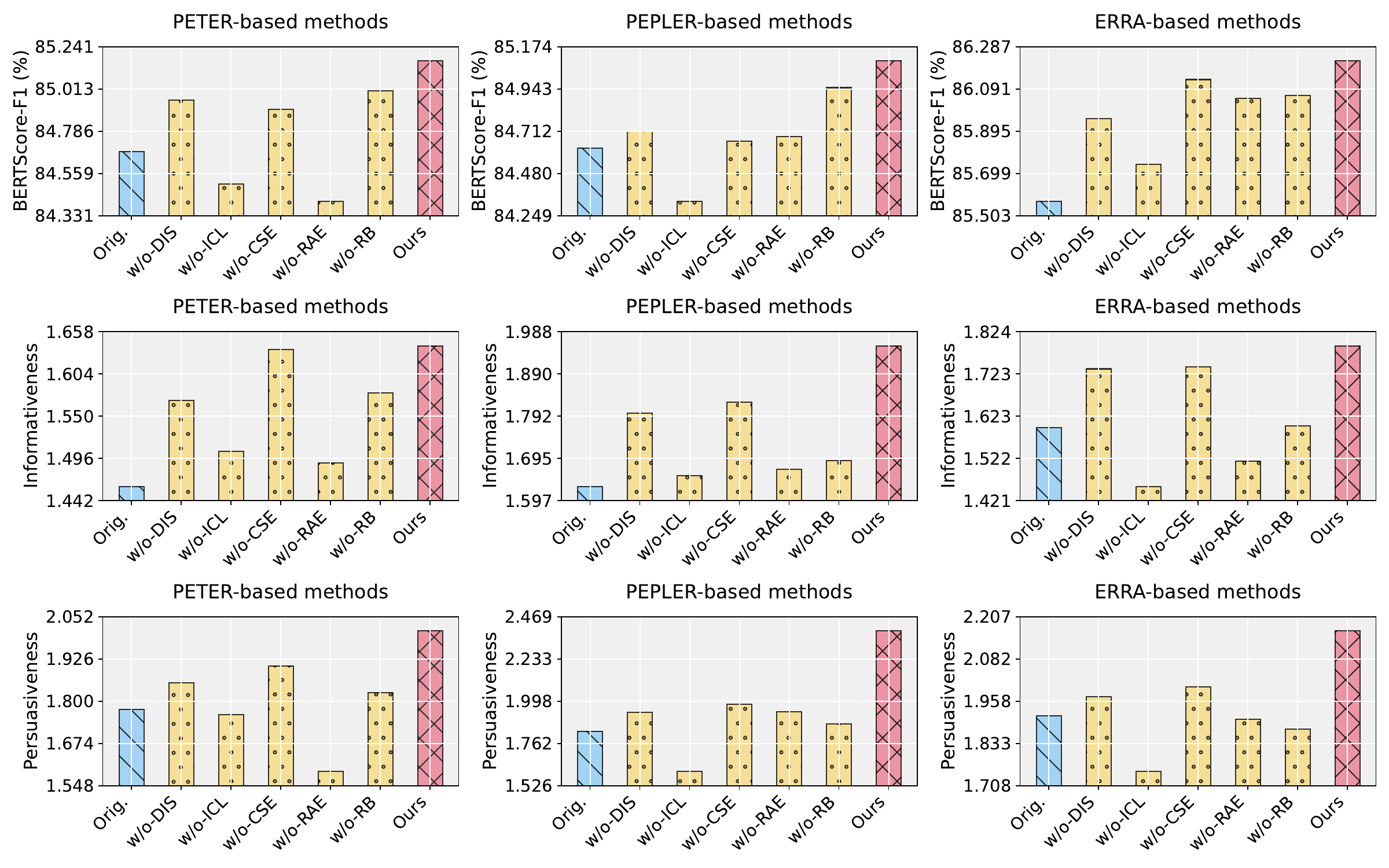}
    \caption{Ablation study on Beauty Dataset.}
    \label{fig:ablation_study_on_beauty}
\end{figure}

\begin{figure}[t]
    \centering
    \includegraphics[width=\textwidth]{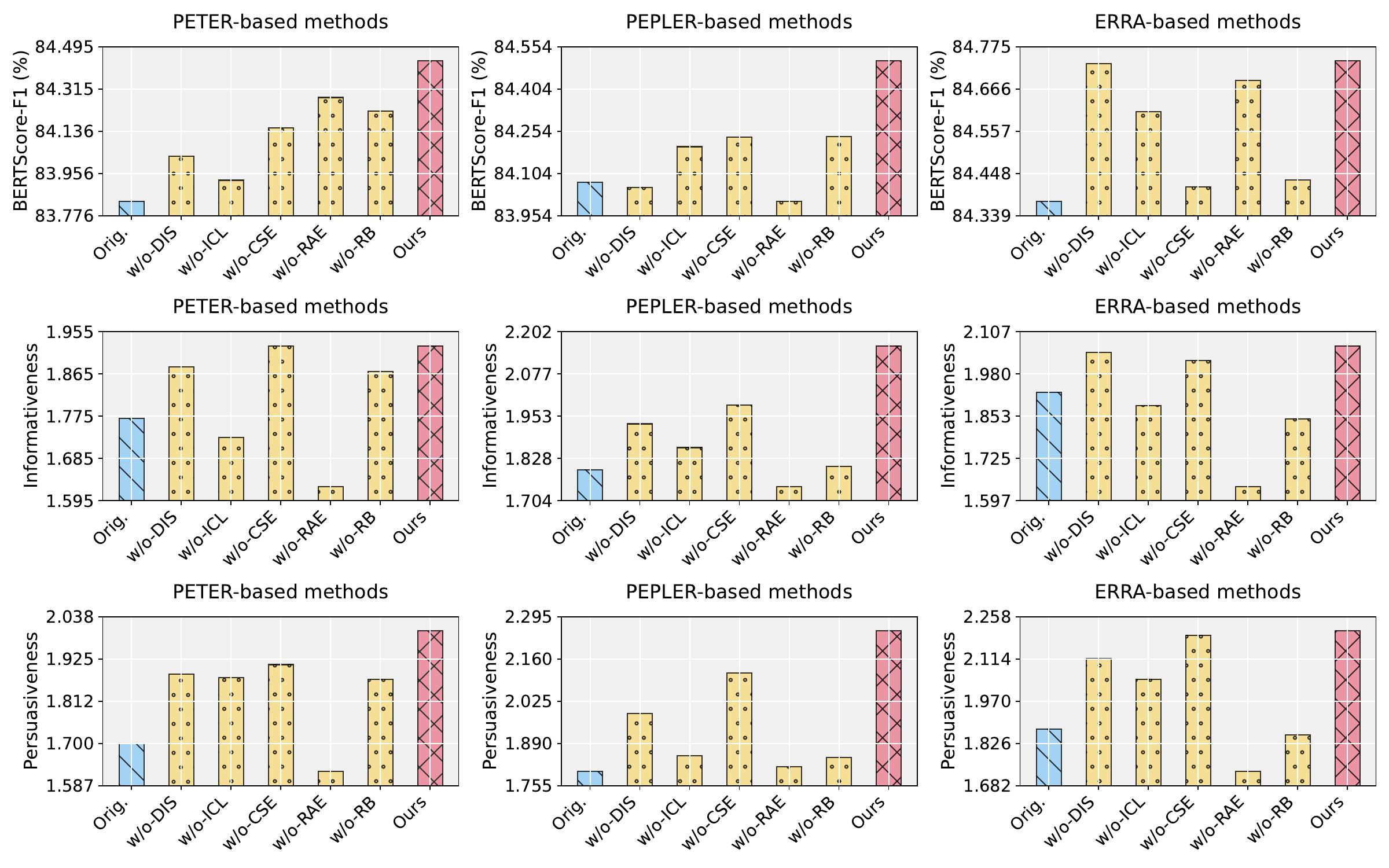}
    \caption{Ablation study on Yelp Dataset.}
    \label{fig:ablation_study_on_yelp}
\end{figure}

We compare the overall performance of all methods across different datasets. Specifically, the objective and subjective evaluation results of explanation quality, and recommendation accuracy performance of rating prediction are presented in Table~\ref{tab:obj},~\ref{tab:sub}, and~\ref{tab:acc}, respectively. Based on the results and our analysis, we make the following key observations             :
\begin{itemize}[leftmargin=*]
    \item Firstly, from Table~\ref{tab:obj}, it is obvious that among all the baselines, ERRA shows relatively better overall performance \emph{w.r.t.} objective evaluation of explanation quality. This underscores the benefit of supplementing additional information with the knowledge base obtained from the training corpus, leading to more accurate and comprehensive text generation. Encouragingly, by applying holistic based (`X-H') and multi-perspective based (`X-M') optimization framework approaches on various explainable recommendation backbones\footnote{`X' represents the base model, \emph{e.g.}, PETER, PEPLER, ERRA, etc.}, the model performances are significantly improved by a large margin across all metrics and datasets. For example, on average, `X-H' methods achieve substantial improvements of 24.31\%, 5.95\%, and 7.25\% in FMR, BS-F, and RL-F, respectively, while `X-M' methods increase of 28.74\%, 8.37\%, and 9.28\%, respectively, which clearly demonstrate the effectiveness and generality of our framework under different settings. We attribute these performance gains to our innovative incorporation of human feedback driven mechanism into the optimization process, enabling the model to delve into additional unobserved augmented data, particularly focused on the long-tailed users and items.
    \item Secondly, as seen from the subjective evaluation results in Table~\ref{tab:sub}, our methods consistently outperform the original base models. These results validate the strong generalization ability of our framework, suggesting that guiding model optimization through anthropomorphic feedback is more suitable for aligning with human preferences than static target fitting.
    Furthermore, multi-perspective-based models play a more significant role in enhancing explanation quality compared with holistic-based models in most cases (19 out of 24). One potential reason is that optimization approach focused on holistic quality rewards may struggle to enhance different aspects of quality simultaneously due to negative transfer influence between conflicting objectives, leading to suboptimal explanation performance. In contrast, the multi-perspective framework based on Pareto optimality effectively decouples different optimization objectives, thereby achieving superior informativeness and persuasiveness.
    \item For rating prediction task, the results of different methods on all datasets are shown in Table~\ref{tab:acc}. Based on the results, we can easily find that SVD++ outperforms PMF in terms of RMSE and MAE metrics, demonstrating its superior ability to capture user preferences through incorporating additional implicit feedback. 
    With the help of the multi-task learning mechanism for user recommendation explanation generation and rating prediction, both NRT and NETE have achieved better recommendation quality. For example, NRT and NETE show significant improvements over PMF and SVD++ in terms of the MAE metric on the Beauty and Sports datasets. This aligns with findings from previous work~\cite{DBLP:conf/acl/ChengWLZ0LL23,li2023personalized,li-etal-2021-personalized}. 
    Moreover, by applying our proposed HF4Rec framework to different backbones, we observe that both the holistic-based and multi-perspective-based methods outperform the base models in most cases. For example, the holistic-based models and multi-perspective-based methods on the Beauty dataset exceed the base model by an average of 1.03\% (\emph{w.r.t.} RMSE) and 1.48\% (\emph{w.r.t.} MAE), and 2.41\% (\emph{w.r.t.} RMSE) and 2.55\% (\emph{w.r.t.} MAE), respectively.  These findings indicate that, compared to traditional static explanation-text fitting supervision mechanisms, our methods leverage the powerful role-playing and semantic understanding capabilities of large language models. This allows for real-time feedback on the explanations generated during the model optimization process, and through a reinforcement learning-based reward maximization objective, our methods achieve better recommendation performance.
\end{itemize}

\subsection{Ablation Studies (RQ2)}

To validate the effectiveness of each component, we design five variants of our HF4Rec method as follows:
\begin{itemize}
    \item \textit{w/o}-DIS: This variant disables difficulty-aware sampling strategy and instead randomly sample data from unobserved data.
    \item \textit{w/o}-ICL:  This variant removes in-context-learning technique, retaining only user and target item information.
    \item \textit{w/o}-CSE: Instead of using customized reward scoring examples, this variant directly employs human-written prompt prototypes.
    \item \textit{w/o}-RAE: This variant deactivates retrieval-enhanced user information extraction and instead randomly chooses $K$ interaction items from the user’s history.
    \item \textit{w/o}-RB: This variant drops the advantage function, directly using reward values to update the policy network.
\end{itemize}
We conduct ablation studies on the Beauty and Yelp datasets. The metrics of BERTScore-F1 and subjective evaluation \emph{w.r.t.} informativeness and persuasiveness are adopted for experimental assessment.

From the results in Figure~\ref{fig:ablation_study_on_beauty} and~\ref{fig:ablation_study_on_yelp}, we can observe that removing any component leads to a decrease in performance across various evaluation metrics. Specifically, removing CSE and RAE consistently results in a significant drop in both BERTScore-F1 and informativeness, highlighting the importance of optimization strategies focused on difficulty-aware samples and retrieval-based user information for enhancing overall explanation personalization and informativeness. Notably, we find that removing the In-Context-Learning (ICL) technique causes a severe performance decline across all metrics, indicating that few-shot learning is crucial for enabling large language models to adapt effectively to recommendation explanation generation and reward assessment. These findings emphasize the contributions of individual components in enhancing the model's performance and confirm that the full model, with all components intact, yields the most optimal quality of generated recommendation explanations.

\subsection{Feedback Consistency Between Human and LLM (RQ3)}\label{sec:consistency}
\begin{table}[htbp]
    \centering
    \caption{Performance comparison of explanation quality (\emph{w.r.t.} Informativeness and Persuasiveness) between Human and Large Language Model (LLM) on Beauty and Yelp datasets. The best performance results are denoted in \textbf{bold} fonts.}
    \label{tab:human_vs_llm}
    \renewcommand{\arraystretch}{1.15}
    \resizebox{\textwidth}{!}{
    \begin{tabular}{*{5}{c}|*{4}{c}} 
    \hline\hline
    & \multicolumn{4}{c|}{Beauty} & \multicolumn{4}{c}{Yelp} \\
     & Info.(Human) & Persv.(LLM) & Info.(Human) & Persv.(LLM) & Info.(Human) & Persv.(LLM) & Info.(Human) & Persv.(LLM) \\
     \toprule
     PETER    & 2.033±0.6046          & 1.973±0.6923          & 1.667±0.5375          & 1.833±0.6368          & 2.060±0.5800          & 2.053±0.5864          & 1.900±0.3958          & 1.733±0.4422          \\
PETER-H  & {2.113±0.7168}    & { 2.180±0.6640}    & { 1.800±0.5416}    & { 2.033±0.7063}    & { 2.093±0.6148}    & { 2.127±0.6250}    & { 2.000±0.4472}    & { 1.900±0.5972}    \\
PETER-M  & \textbf{2.300±0.7371} & \textbf{2.407±0.6229} & \textbf{2.100±0.5385} & \textbf{2.200±0.6000} & \textbf{2.507±0.5859} & \textbf{2.580±0.5689} & \textbf{2.100±0.3000} & \textbf{2.200±0.4000} \\
\hline
PEPLER   & 2.080±0.7350          & 2.100±0.7895          & 1.767±0.4955          & 1.933±0.6289          & 2.047±0.6667          & 2.213±0.6692          & 1.700±0.5260          & 1.867±0.5617          \\
PEPLER-H & { 2.140±0.7486}    & { 2.187±0.7338}    & { 1.833±0.6368}    & { 2.267±0.6289}    & { 2.247±0.6211}    & { 2.320±0.5694}    & { 2.100±0.3958}    & { 2.133±0.5617}    \\
PEPLER-M & \textbf{2.233±0.6968} & \textbf{2.407±0.5899} & \textbf{2.100±0.3958} & \textbf{2.367±0.4819} & \textbf{2.420±0.6030} & \textbf{2.520±0.5381} & \textbf{2.267±0.4422} & \textbf{2.367±0.6574} \\
\hline
ERRA     & 1.753±0.7205          & 2.060±0.6349          & 1.667±0.4714          & 2.000±0.5164          & 2.040±0.6621          & 2.153±0.6903          & 1.700±0.5859          & 1.900±0.5972          \\
ERRA-H   & { 1.900±0.6608}    & { 2.167±0.6046}    & { 1.933±0.5735}    & { 2.200±0.4761}    & { 2.313±0.6233}    & { 2.400±0.5657}    & { 2.200±0.5416}    & { 2.067±0.4422}    \\
ERRA-M   & \textbf{2.160±0.5547} & \textbf{2.333±0.5497} & \textbf{2.200±0.4761} & \textbf{2.533±0.5617} & \textbf{2.327±0.5354} & \textbf{2.440±0.5713} & \textbf{2.400±0.5538} & \textbf{2.333±0.5375}\\
    \hline\hline
    \end{tabular}
}
\end{table}

\begin{table}[htbp]
    \centering
    \caption{Average Spearman's correlation between Human-to-Human and Human-to-LLM on Beauty and Yelp datasets. Note that a higher Spearman's correlation indicates a more similar rank between the two variables (\emph{i.e.}, Human vs. Human or Human vs. LLM).}
    \label{tab:spearman_correlation}
    \renewcommand{\arraystretch}{1.15}
    \resizebox{\textwidth}{!}{
    \begin{tabular}{*{4}{c}|*{2}{c}|*{2}{c}} 
    \hline\hline
          &                 & \multicolumn{2}{c|}{PETER-based methods}       & \multicolumn{2}{c|}{PEPLER-based methods}      & \multicolumn{2}{c}{ERRA-based methods}        \\
   &                 & Human vs. Human & Human vs. LLM & Human vs. Human & Human vs. LLM & Human vs. Human & Human vs. LLM \\
   \toprule
\multirow{2}{*}{Beauty} & Informativeness & 0.3731          & 0.3425        & 0.1883          & 0.2754        & 0.7215          & 0.4509        \\
                        & Persuasiveness  & 0.4695          & 0.4830        & 0.1981          & 0.3652        & 0.4456          & 0.5476        \\
    \hline
\multirow{2}{*}{Yelp}   & Informativeness & 0.4800          & 0.5155        & 0.4502          & 0.3120        & 0.4338          & 0.3265        \\
                        & Persuasiveness  & 0.5457          & 0.4345        & 0.4500          & 0.4447        & 0.2933          & 0.4517 \\
    \hline\hline
    \end{tabular}
}
\end{table}

To verify the rationality of using LLMs to simulate real-human participation for providing rewards and experimental evaluations, we conduct extensive experiments on the Beauty and Yelp datasets and adopt the subjective metrics. Specifically, we randomly select 30 samples from each test dataset and employ 10 university student annotators from diverse backgrounds to evaluate three methods (\emph{i.e.}, X, X-H, and X-M) based on different base models (\emph{i.e.}, PETER, PEPLER, and ERRA). For the evaluation of informativeness and persuasiveness in the generated recommendation explanations, we adopt the same definitions provided in Prompt~\ref{prompt:reward_prompt} to guide annotators in assessing these two subjective metrics. The informativeness and persuasiveness scores from all annotators are averaged as the final results. 
The experimental results are shown in Table~\ref{tab:human_vs_llm}, it is evident to observe that there are high consistency performances in scoring trends across different datasets and quality perspectives between humans and LLMs, with the ranking being $\text{X-M} > \text{X-H} > \text{X}$. This indicates that the large language models could accurately differentiate the quality of generated explanations just as humans do.

\begin{figure}[t]
    \centering
    \includegraphics[width=\textwidth]{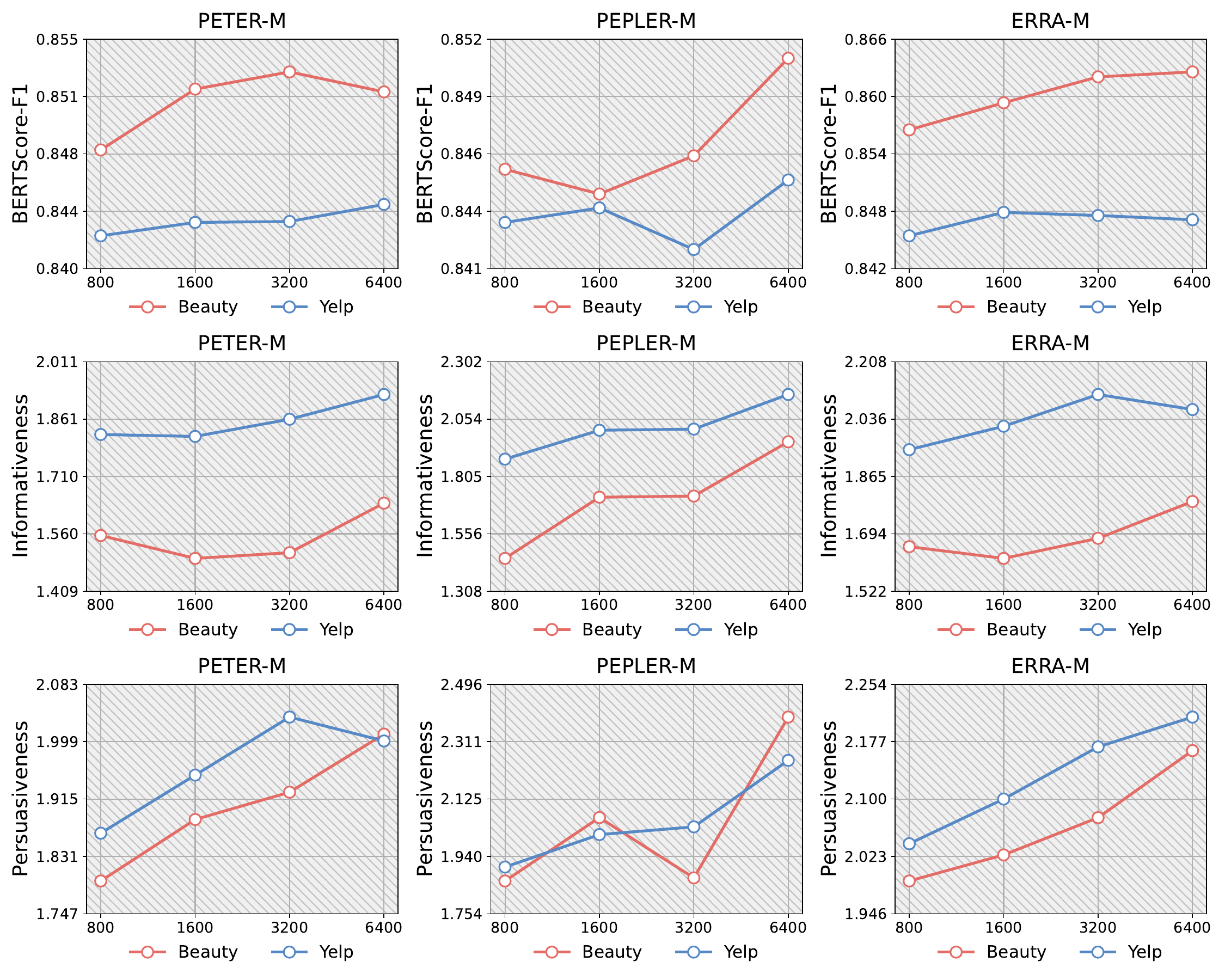}
    \caption{Trends in model performance \emph{w.r.t.} data volume on Beauty and Yelp datasets.}
    \label{fig:datasize_trend}
\end{figure}

Further, we introduce the Spearman correlation coefficient~\cite{zar2005spearman} to quantitatively measure the consistency in ranking tendencies both among human annotators and between human annotators and LLMs. Higher values indicate stronger alignment in their relative assessments. Additionally, future work may consider employing Cohen's Kappa to explicitly evaluate annotation agreement, capturing the exact match consistency between annotators and LLMs.
We calculate the average Spearman correlation both among annotators, and between annotators and LLM. In specific, as shown in Table~\ref{tab:spearman_correlation}, 
the average Spearman correlation coefficients between human annotators for the informativeness and persuasiveness metrics are 0.4411 and 0.4004, respectively. This indicates a moderate positive correlation among annotators, confirming the reasonableness and effectiveness of our experimental setup. Additionally, by comparing the consistency results between Human-Human and Human-LLM evaluations, we observe that the LLM achieves evaluation consistency comparable to humans. Remarkably, in some cases (6 out of 12), the LLM even surpasses human annotators. These results suggest that the LLM we designed demonstrates strong human-like capability in subjective evaluations, thereby enhancing user satisfaction with model-generated explanations.
Actually, this result is not surprising, because LLMs are trained on large-scale knowledge corpora and, coupled with our carefully designed customized reward prompts, effectively leverage their emergent language comprehension and logical reasoning capabilities, thereby achieving the excellent human-like performance.

\subsection{Parameter Analysis (RQ4)}

\begin{figure}[t]
    \centering
    \includegraphics[width=0.9\textwidth]{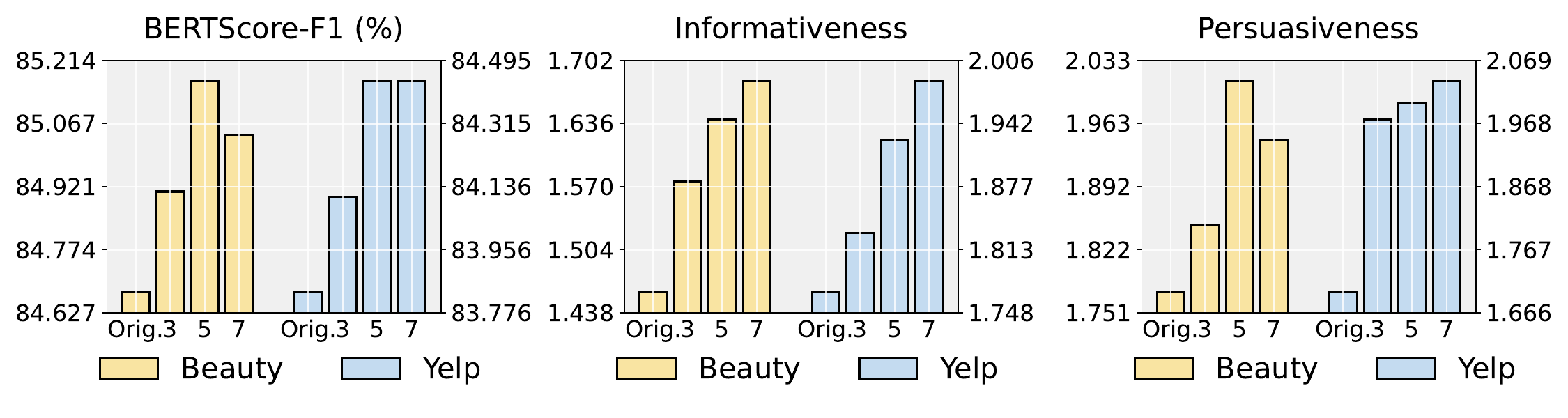}
    \caption{Trends in model performance \emph{w.r.t.} exploration number of generated explanations on Beauty and Yelp datasets.}
    \label{fig:expnum_trend}
\end{figure}

\subsubsection{Trends in model performance \emph{w.r.t.} training data volume}
Traditional explainable recommendation models usually follow the supervised learning paradigm, which are likely limited by existing observed interactions and user reviews. This problem can be alleviated by our method because the proposed optimization framework utilizes augmented human-like feedback for providing valuable learning signals to explored user behaviors. We change the data volume in the range of $\{800,1600,3200,6400\}$ to illustrate the trend in model performance. The evaluation results on the Beauty and Yelp datasets are shown in Figure~\ref{fig:datasize_trend}. As we can see, the overall trends for different evaluation metrics significantly improve when the data volume increases. This indicates that the HF4Rec framework is able to make better use of potential unobserved data to enhance the model’s generalization performance.

\subsubsection{Trends in model performance \emph{w.r.t.} exploration number of generated explanations}
In our method, we explore $J$ explanations for each user-item pair based on the text conditional probability generation mechanism of the language model. Then, the average reward value of these $J$ explanations is adopted as a baseline and replace the original reward in the policy network optimization process. The number of generated explanations impacts the performance of the explainable recommendation task. To investigate this, we vary the number of generated explanations across $\{3, 5, 7\}$. The experimental results, shown in Figure~\ref{fig:expnum_trend}, indicate that as the number increases, both objective and subjective metrics generally improve. This suggests that the introduction of the advantage function provides a clearer optimization direction for model parameter updates. Additionally, it helps the model quickly grasp the general patterns of high-quality explanations, leading to explanations that are more persuasive, informative, and accurate.

\subsection{Further Analysis}
Next, we continue to study whether HF4Rec works well in more detailed analysis.

\subsubsection{Efficiency of Pareto Optimization (RQ5)}

\begin{table}[t]
    \centering
    \caption{Performance comparison of different multi-perspective optimization approaches on the Beauty and Yelp datasets. `Orig.' denotes the original method, while `Avg.' refers to the optimization method using average weighted multi-perspective losses.}
    \label{tab:pareto}
    \renewcommand{\arraystretch}{1.15}
    \begin{tabular}{*{5}{c}|*{3}{c}|*{3}{c}} 
    \hline\hline
     & & \multicolumn{3}{c|}{PETER-based} & \multicolumn{3}{c|}{PEPLER-based} & \multicolumn{3}{c}{ERRA-based}\\
     & & Orig. & Avg. & Ours & Orig. & Avg. & Ours & Orig. & Avg. & Ours\\
     \toprule
          \multirow{2}{*}{Beauty} & Info. & 1.460 & 1.470 & \textbf{1.640} & 1.630 & 1.745 & \textbf{1.955} & 1.595 & 1.580 & \textbf{1.790}\\
        & Persv. & 1.775 & 1.840 & \textbf{2.010} & 1.830 & 2.095 & \textbf{2.390} & 1.915 & 2.050 & \textbf{2.165}\\
        \hline
        \multirow{2}{*}{Yelp} & Info. & 1.770 & 1.880 & \textbf{1.925} & 1.795 & 2.080 & \textbf{2.160} & 1.925 & 1.980 & \textbf{2.065}\\
        & Persv. & 1.700 & 1.965 & \textbf{2.000} & 1.800 & 2.205 & \textbf{2.250} & 1.875 & 2.100 & \textbf{2.210}\\
    \hline\hline
    \end{tabular}
\end{table}

\begin{figure}[t]
    \centering
    \includegraphics[width=\textwidth]{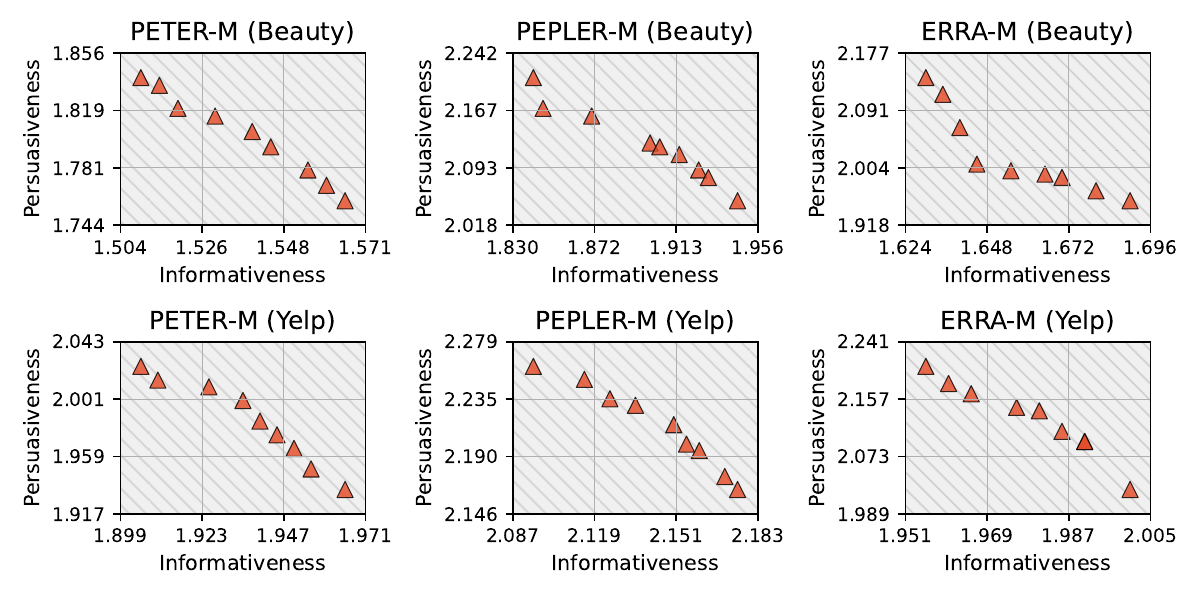}
    \vspace{-0.8cm}
    \caption{Impact analysis of multi-perspective prior preference vector on informativeness and persuasiveness.}
    \label{fig:pareto_front}
\end{figure}

To validate the efficacy of Pareto optimization in our HF4Rec framework, we compare it with the average weighted method on the Beauty and Yelp datasets, denoted as `Avg.'. As shown in Table~\ref{tab:pareto}, `Avg.' suffers from significant performance drops. We speculate an important reason is that the `Avg.' method's inability to adaptively adjust the weights of different objectives during the model training process. This may lead to the dominance of certain objective in the optimization process, potentially degrading the overall quality of the explanations. In contrast, our principled method can strike a balance between different perspectives of explanation quality. Further, to verify whether our optimization approach can attain the Pareto front, we fix $\beta_1 + \beta_2 = 0.8$, and incrementally adjust $\beta_1$ from 0 to 0.8 in intervals of 0.1. The experimental results are shown in Figure~\ref{fig:pareto_front}, it can be observed that the multi-perspective Pareto optimization approach successfully balances the trade-off between informativeness and persuasiveness, leading to the satisfactory performance on Pareto front, that is, no objective can be improved without sacrificing at least one other. This clearly showcases the effectiveness and efficiency of Pareto optimization strategy.

\subsubsection{Robustness Evaluation (RQ6)}

\begin{figure}[t]
    \centering
    \includegraphics[width=\textwidth]{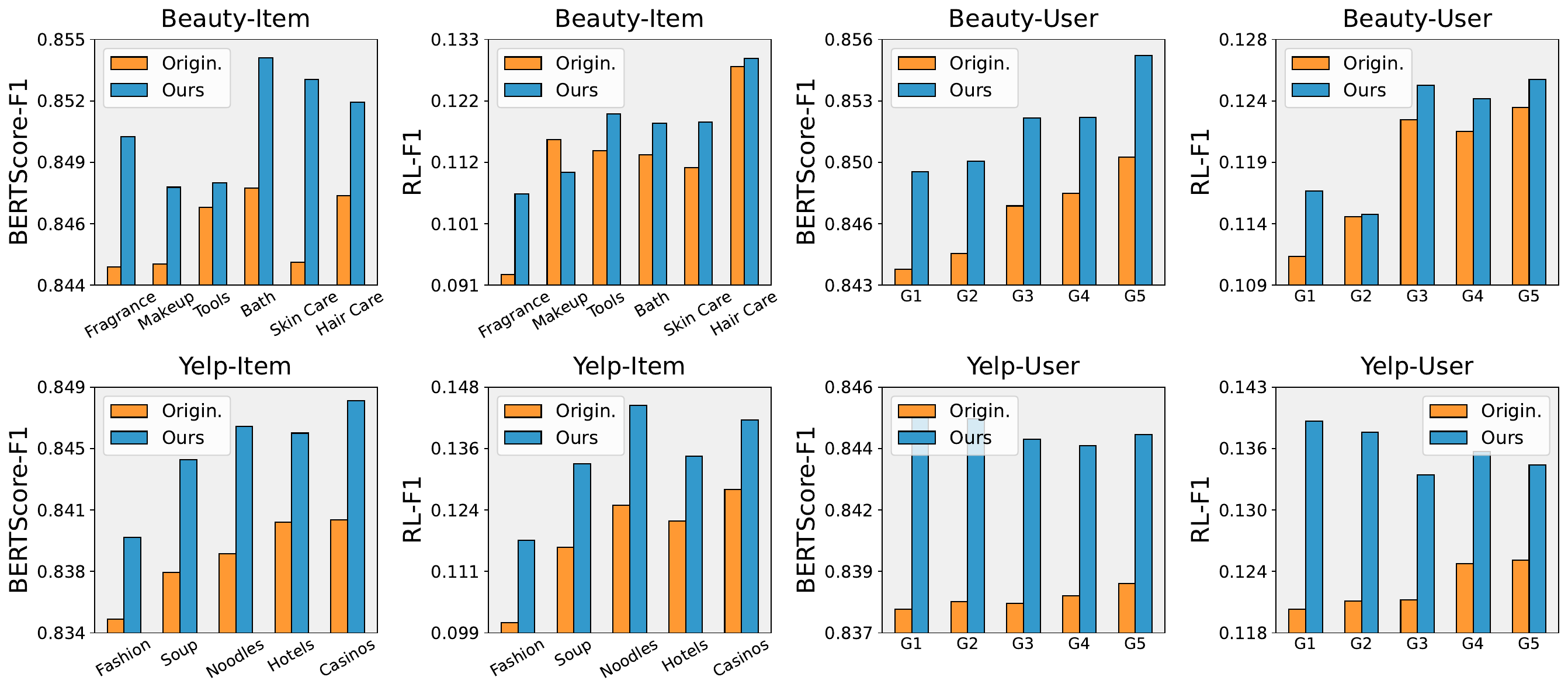}
    \caption{Robustness analysis of data sparsity.}
    \label{fig:sparse}
\end{figure}

To further validate that HF4Rec can improve the model robustness, we compare the performance of the original base model and HF4Rec with various data sparsity levels. Specifically, for items, we randomly select five item categories and sort them in ascending order based on the average number of interactions. For users, we divided them into five groups based on interaction frequency, while keeping the total number of interactions within each group constant. For representation convenience, we denote five user groups from lowest to highest activity as G1, G2, G3, G4, and G5. The detailed comparison results on PETER base model are displayed in Figure~\ref{fig:sparse}. From the results, we can find that HF4Rec beats the original models in most user and item groups. Moreover, as the number of interaction decreases, HF4Rec demonstrates clearer advantage in scenarios where items and users interact sparsely. This finding highlights the significance of incorporating human feedback to fine-tune models by utilizing augmented data obtained through difficulty-aware sampling strategy.

\subsubsection{Case Study}

\begin{figure}[t]
    \centering
    \includegraphics[width=\textwidth]{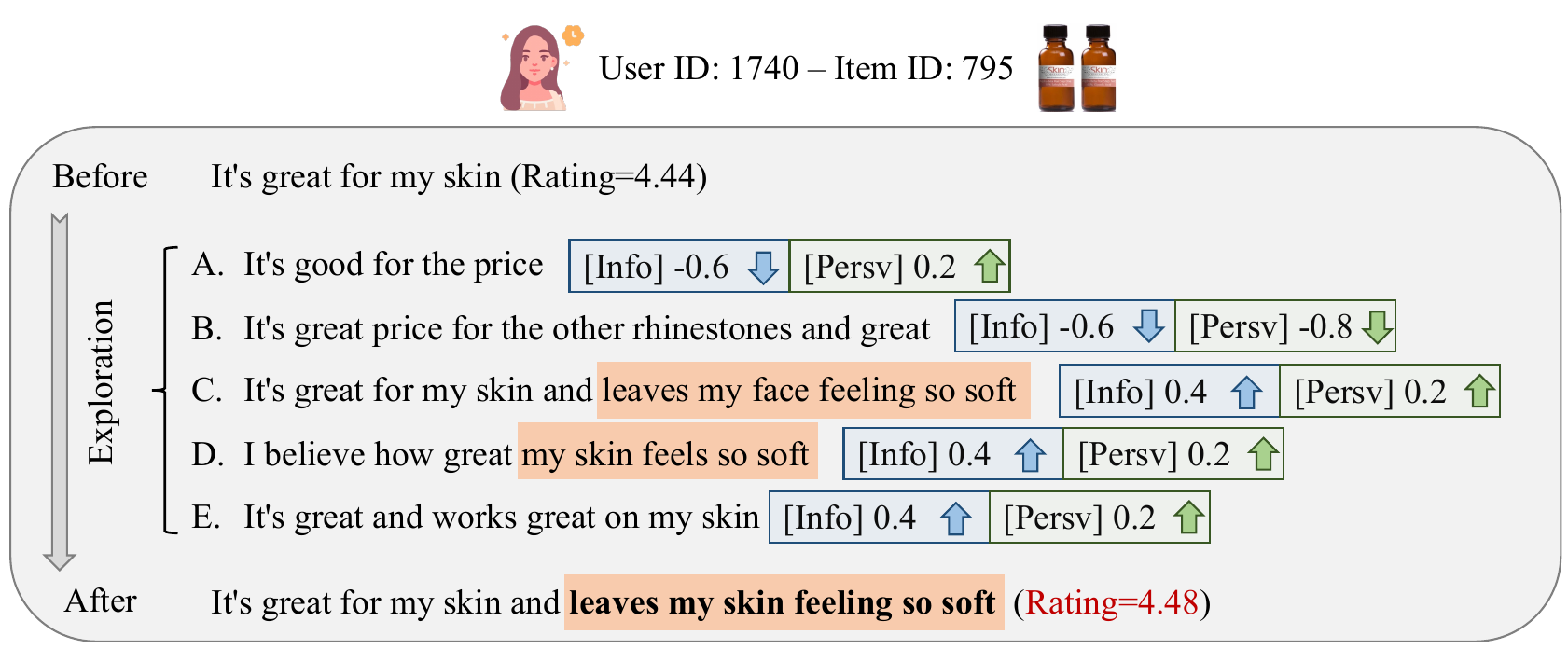}
    \caption{Case Study.}
    \label{fig:case}
\end{figure}

Figure~\ref{fig:case} presents a real instance from the Beauty dataset, illustrating the intricate details of internal optimization within the HF4Rec framework. Specifically, we randomly sample an unobserved interaction between user \#1740 and item \#795, a popular skin product. Before optimization, the model predict the recommendation explanation as ``It's great for my skin''. Subsequently, the HF4Rec framework explores five different explanations for this user-item pair. The LLM then assigns reward values for informativeness and persuasiveness to each explanation (as depicted in the Figure~\ref{fig:case} with BLEU blocks for informativeness and green blocks for persuasiveness), and these rewards are normalized using the designed advantage function in Section~\ref{sec:adv}. For example, the sentence `B' does not provide desired informativeness or persuasiveness, resulting in negative advantage values. In contrast, the user experience with the skin product are more crucial for user \#1740, leading to higher scores for explanations C, D, and E.

More importantly, the advantage values of different explored explanation provide explicit directions for parameter updating, that is, the model can be optimized towards explanations with positive advantages and away from those with negative ones. After optimization, the network effectively incorporates words with greater informativeness and persuasiveness from positively advantaged explanations (as highlighted in the Figure~\ref{fig:case}). This demonstrates HF4Rec's capability to generate valuable human-like feedback for unobserved data. Furthermore, as the model’s performance of explanation generation enhances, it also facilitates improving the accuracy of the rating predictions, as evidenced by the increase in the score from 4.44 to 4.48.

\section{Related Work}
In the era of information overload, recommender systems (RS) play a pivotal role in filtering and achieving personalized user experiences~\cite{wang2023recagent,10.1145/3589334.3645537,10.1145/3442381.3450039,10.1145/3459637.3482016,10.1145/3569423,10.1145/3580305.3599519}. However, with the deep recommendation algorithms growing more sophisticated, there is a rising call for transparency and comprehensibility in their decision-making mechanism. Explainable recommender systems (ERS) meet this requirement by providing insights into the reasoning behind recommendation outputs, thereby boosting user engagement and satisfaction. 

In recent years, a variety of explainable recommendation models (ERM) have rapidly emerged, including techniques such as reasoning rule mining~\cite{chen2021neural,shi2020neural,zhu-etal-2021-faithfully,gao2019explainable,Balog_Radlinski_Arakelyan_2019,peake2018explanation}, predefined explanation template filling~\cite{Li_Chen_Dong_2021,tan2021counterfactual,zhang2014explicit,wang2018explainable}, and knowledge graphs path generation~\cite{ai2018learning,fu2020fairness,xian2019reinforcement,xian2020cafe,chu2024llm,wang2019explainable,wang2024reinforced}. However, these methods usually require extensive manual labor and struggle to provide personalized, human-like, and flexible recommendation explanations. With the rapid advancement of natural language processing technologies, explainable recommendation methods based on natural language text generation have become increasingly popular~\cite{li2023personalized}. These approaches leverage state-of-the-art language models to provide explanations that are not only flexible but also easily comprehensible to users, marking a significant shift towards more user-centered recommendation systems. For example, the attribute-to-sequence (Att2Seq)~\cite{dong2017learning} model employs an LSTM-based attention-enhanced approach to predict product reviews from given attributes of users, products, and ratings. Similarly, the Neural Rating and Tips generation (NRT)~\cite{li2017neural} utilizes a GRU-based framework to jointly model rating predictions and tips generation, translating the latent representations of users and items into sentences that mimic human experiences and sentiments. The Neural Template (NETE)~\cite{li2020generate} explanation generation framework learns sentence templates from available data, producing template-controlled sentences that enhance the expressiveness of explanations. To improve the personalization of predicted explanations, PETER~\cite{li-etal-2021-personalized} leverages user and item IDs to predict target explanations, endowing IDs with linguistic meaning, and pioneers the application of the Transformer-based architecture~\cite{vaswani2017attention} to personalized natural language generation tasks. Furthermore, by utilizing retrieval-enhanced methods, ERRA~\cite{DBLP:conf/acl/ChengWLZ0LL23} generates more informative and accurate explanations by incorporating additional information from the training data and using aspect-enhancement components to better model user preferences with more relevant details. Since model-generated explanations may suffer from issues such as repetition and factual inaccuracies, PRAG~\cite{xie2023factual} proposes integrating a retrieval mechanism with a generative model to produce informative and factually accurate recommendation explanations by leveraging personalized information from users and items.
Moreover, as large language models (LLMs) have gained extensive knowledge and capabilities acquired from massive corpora~\cite{DBLP:journals/tmlr/WeiTBRZBYBZMCHVLDF22,zhao2023survey,wei2022chain}, they are increasing proving their effectiveness in explainable recommendation tasks. The methods like PEPLER~\cite{li2023personalized}, LLM2ER-EQR~\cite{yang2024fine}, KnowRec~\cite{colas-etal-2023-knowledge}, and LLMHG~\cite{chu2024llm} enhance personalized explainable models using GPT-series~\cite{10.5555/3495724.3495883}, BERT~\cite{devlin-etal-2019-bert}, BART~\cite{lewis-etal-2020-bart}, etc., showcasing their unprecedented potential.

However, existing models mostly follow a supervised learning approach, fitting sparse textual data and lacking the capability to provide valuable feedback signals for potentially better or worse explanation outputs. Additionally, the evaluation of explanation quality is essentially multi-perspective, with potential conflicts among different aspects. Previous explainable recommendation methods overlook how to balance multiple quality enhancement objectives, resulting in suboptimal performance. Different from these traditional approaches, this paper innovatively proposes a human-like feedback-driven optimization framework that reconsiders the human-centered explainable recommendation task. we employ large language models for human-like assessments of explored explanation predictions, and introduce Pareto optimization to improve multi-perspective quality objectives.

\section{Conclusion}
In this paper, we introduce HF4Rec, a human-like feedback-driven optimization framework for enhancing the quality and robustness of explainable recommendation models. Leveraging Large Language Models (LLMs) as human simulators, our framework provides accurate reward estimations for exploration sets of generated explanations, which effectively provide the valuable learning signals of model parameter updating via the policy-gradient mechanism. Furthermore, we incorporate Pareto optimization to mitigate potential conflicts among different aspects of explanations, simultaneously maximizing multiple objective functions to improve multi-perspective explanation performance. Additionally, HF4Rec adopts an off-policy optimization pipeline with a replay buffer and importance sampling-based bias correction, significantly improving model generality. We conduct extensive experiments to demonstrate the HF4Rec's superiority in objective and subjective evaluations of explanation generation and recommendation accuracy, validating the reasonability using LLMs as proxies for human-like feedback and experimental evaluation.

This paper endeavors to integrate human feedback into the reinforcement learning optimization framework to enhance the multi-perspective quality of recommendation explanations. In this light, there are numerous promising directions for future research. Specifically, we can apply our optimization framework to larger explainable recommendation models, such as models augmented by multi-task oriented recommendation large models. Additionally, the methods proposed in this paper are not limited to explainable recommendation tasks but can also be extended to other human-centric scenarios, such as dialogue systems. By leveraging the substantial language comprehension and reasoning capabilities of LLMs, we can refine traditional model optimization methods to more closely align with actual human preferences.



\bibliographystyle{ACM-Reference-Format}
\bibliography{sample-base}

\end{document}